\DeclareMathOperator*{\mini}{min}
\newtheorem{prop}{Proposition}
\title{Enhancing Quality of Experience using DHT Overlay on Device-to-Device Communications in LTE-A Networks}
\author{
  Sumit Kumar Tetarave \\
  Department of Computer Science \& Engineering\\
  Indian Institute of Technology Patna\\
  Bihar, India 801103 \\
  \texttt{sktetarave@iitp.ac.in} \\
   \And
 Somanath Tripathy \\
  Department of Computer Science \& Engineering\\
  Indian Institute of Technology Patna\\
  Bihar, India 801103 \\
  \texttt{som@iitp.ac.in} \\
     \And
 R. K. ~Ghosh \\
  Department of Computer Science \& Engineering\\
  Indian Institute of Technology Kanpur\\
  UP, India 208016 \\
  \texttt{rkg@cse.iitk.ac.in} \\
}
\begin{document}
\maketitle

\begin{abstract}
In~\cite{tetarave2018v}, we proposed a two-tier model for creating and maintaining a distributed hash table (DHT) overlay of smartphones. The purpose was to reduce Internet usage and increase D2D content sharing over LTE-A/5G network. However, the real challenge is not in creating a two-tier model, but to evolve an efficient overlay that offers enhanced opportunities for device-to-device (D2D) content sharing over the existing model. In this paper, we formulated the selection of tier-1 devices as finding P-medians in a distribution network and solved it using Lagrangian relaxation method. Tier-2 devices become clients seeking content sharing services from tier-1 devices.  A strong motivation in this work is to raise a user's perception of the grade of service known as Quality of Experience (QoE). We analyzed the QoE assessment challenge for resource-constrained smartphones under the proposed model of enhanced D2D communication. Our focus is to establish a framework to assess QoE for apps and services over LTE-A/5G networks with an improved level of D2D communication.  The simulation and the experimental results validate the claim that substantial improvements in QoE are indeed possible with the proposed mathematical model for selection and placement of tier-1 mobile devices and maintaining a DHT for D2D communication. 
\end{abstract}

\keywords{Quality of Experience \and File Sharing Application \and Distributed Hash Table \and   P-Median \and LTE-A Network\and  Peer-to-Peer}

\section{Introduction}
\label{intro}
The mobile networks have witnessed exponential growth during the last decade or so. The digitization process is extremely convenient for any transaction involving finance, business, health-care system, education, travels, entertainments, etc. However, most of these transactions are also associated with the online sharing of digital content on smartphones either over the Internet or the cellular networks (LTE/5G/3G, etc.). Excessive use of a phone's wireless interface leads to substantial cost and the drainage of the battery power~\cite{nika2015energy, carroll2010analysis}. Moreover, the sharing of data becomes difficult due to poor cellular network signals. 

WiFi, device-to-device (D2D), or Bluetooth are inexpensive wireless communication channels which can be used under a limited coverage area (vicinity) for content sharing applications~\cite{mcnamara2008creating, paiva2015intuitive, liu2011efficient}.
The performance of the content sharing applications can be improved further by creating and maintaining a distributed hash table (DHT) based overlay of mobile devices in a cellular network such as LTE-A~\cite{tetarave2018v, cchord2012}. In a DHT based implementation, the members of different vicinity communicate using only a few cellular radio links (or the Internet). Therefore, the users can use different apps and services with a better QoE than the non-DHT users, and at the same time bring down the charges due to cellular communication.

Of late, the requirement for Quality of Service (QoS) has been gradually shifting more towards the users' experience which is known as Quality of Experience (QoE). QoE indicates the degree of a user's satisfaction from the communication services. Fiedler et~al.~\cite{fiedler2010generic} made an in-depth study to understand the differences between QoE and QoS.  QoE characterizes qualitative aspects, such as a user's perception, experience, and expectations from an application and the network performance. On the other hand, QoS is primarily a quantitative measure to capture the quality aspects as perceived by a user.  Therefore, QoE more appropriate for characterizing the quality of the communication service provided by the operators~\cite{sousa2017survey}. 

In~\cite{alfayly2012qoe}, an intelligent spectrum allocation in D2D communications is used instead of Internet access to enhance QoE. In our previous work~\cite{tetarave2018v}, we proposed an efficient file sharing mechanism in P2P overlay, wherein a selected subset of mobile stations (MSs), known as pilots, are organized into a DHT ring over WiFi. Each pilot works as a super peer and provides file access capability to all other MSs within D2D (or Bluetooth) range of the pilot. 
However, we realized that the random pilot selection process would lead to uneven drainage of batteries from the pilots due to the imbalance in the data retrieval load. Furthermore, it would also affect the connectivity among the associated pilot members.

In this work, we address the two critical issues mentioned above and analyze the effect of the parameters like minimizing communication distance and data overloading in the selection of tier-1 devices or pilots. We propose an efficient heuristic for pilot selection in a DHT-based P2P  to balance the data retrieval load among the associated peers. 
We use a combination of two independent ideas, viz., short-range communications and DHT overlays to create enhanced opportunities for D2D. The first one cuts down Internet usage while the second reduces search latency. We validate the proposed method through experiments and simulations which yield improved QoE for sharing of text, image, audio, or video files.

The main contributions of this work are summarized below:
\begin{enumerate}
\item We introduce a heuristic-based DHT smartphone overlay to enhance QoE of the users. A small number of devices are chosen as pilots from a collection of a large number of smartphones using the P-Median technique. We solved P-Median placement using Lagrangian relaxation method, and studied its performance. 
\item A framework has been evolved for measuring user satisfaction from the applications on resource constraint devices.  
It provides a user centric satisfaction parameters according to their preference for an application to compute QoE.
\item The performance of content sharing mechanism is evaluated for both simple D2D and DHT-D2D overlay with respect QoE on Vienna LTE-A simulator. 
The results show that the proposed heuristics enhance the QoE by about $25\%$ when compared to the existing methods.
\end{enumerate}

The rest of the paper is organized as follows. Section~\ref{RelWork} presents the research related to an integrated communication environment where D2D connectivity is co-operational with cellular networks.
Section~\ref{v-chord} gives a brief overview of DHT organization of mobile devices. The details of the optimized pilot selection process, the necessity of this selection strategy both from theoretical and experimental perspectives
are presented in Section~\ref{effPilotSelection}. Section~\ref{QuaAsse} deals with the quality assessment framework and its analysis with different number of parameters. Section~\ref{sim} presents simulation and analysis of the proposed strategy of integrating D2D with LTE-A communication.  Finally, we conclude our work and provide some future research directions in Section~\ref{con}.

\section{Related Work}
\label{RelWork}
A fair amount of research has been carried out on the strategies that minimize the use of broadband access in GSM/LTE-A networks for retrieving shared files.  McNamara et~al.~\cite{mcnamara2008creating} proposed an inexpensive mobile P2P file sharing environment called JBPeer using 3G and Bluetooth. JBPeer uses 3G for transmission of control data and Bluetooth for communicating shared data. In~\cite{camps2013device} Camps-Murr et~al. gave an extensive overview concerning the ability of WiFi direct devices to discover and establish connections with one another. The main focus of their research was for designing enhanced power saving protocols.   
Doppler et~al.~\cite{doppler2009device} proposed D2D connections as an underlying layer of the LTE-A network. Their focus is to limit the interference of D2D connections in the 3GPP LTE-A network. The target of the approach is to allow the setup and the management of D2D sessions in an overall communication framework of the LTE-A  network.  In~\cite{corson2010toward}, the authors presented a proximity-aware communication system, named Aura-net.  The D2D communication in Aura-net takes place by proximity region without cellular network services.  

Vicinity-based inexpensive and low overhead data retrieval would be highly preferable by smartphone users. Liu et al.~\cite{liu2015device} proposed an SDN (Software Defined Network) based multi-tier LTE-A network to access the Internet from one hop (D2D) devices without cellular communication for improved QoE. 
Zhu et al.~\cite{zhu2015qoe} discussed a QoE-based scheduling scheme to provide an efficient cellular radio resource allocation for video streaming through D2D communication. It constrains the stall events for each stream to maximize the average quality of video time streaming. 
Alfayly et~al.~\cite{alfayly2012qoe} evaluated the performance of the scheduling algorithms in terms of QoE for VoIP applications over the LTE network.  

Unfortunately, the majority of the aforementioned mechanisms to calculate QoE are complex and not suitable for smart devices with limited resource capacity, such as mobile phones or PDAs. It is, therefore, very important to obtain a light weight QoE computation measurement for resource constraint devices. Such an efficient user satisfaction framework using a DHT-based P2P overlay is described later in Section~\ref{QuaAsse}.

In the next section, we illustrate the basic organization and file retrieval process of  DHT overlay using a recently proposed file sharing mechanism called V-Chord~\cite{tetarave2018v} to understand the different QoE factors involving in DHT-based overlays. Through V-Chord, we  analyze the importance of optimized overlay connectivity, and thereby enhancing QoE.

\section{V-Chord: An efficient file sharing on LTE/GSM network}
\label{v-chord}
V-Chord consists of MSs or devices with multiple communication interfaces such as Bluetooth, D2D, IR, WiFi and cellular networks (GSM/ 3G/ 4G/ LTE), as depicted in Figure~\ref{fig:threeCom}. Some MSs might not have enabled data service over their cellular interfaces, but can avail D2D (or Bluetooth) data services through low range wireless interfaces within a vicinity. 
MSs which have enabled broadband data service  or WiFi connection are eligible for the role of pilots.  MSs within vicinity either communicate through WiFi connected pilots (vicinity 1 and 2) without using Internet access or through Internet gateway (vicinity 1 and 4) as shown in the Figure~\ref{fig:threeCom}.
Each MS is assigned with a unique ID  to map into the V-Chord ring using DHT over GSM/LTE underlay infrastructure.  In this, each ID is assigned with an $m$-bit DHT overlay ID. The $b$ most significant bits out of $m$ bits specify the corresponding $eNodeB$. The next $p$ bits specify the associated pilot and the remaining $h$ bits define the MS-ID under the pilot.
\begin{figure*}[!t]
    \centering
    \includegraphics[width=5.0in]{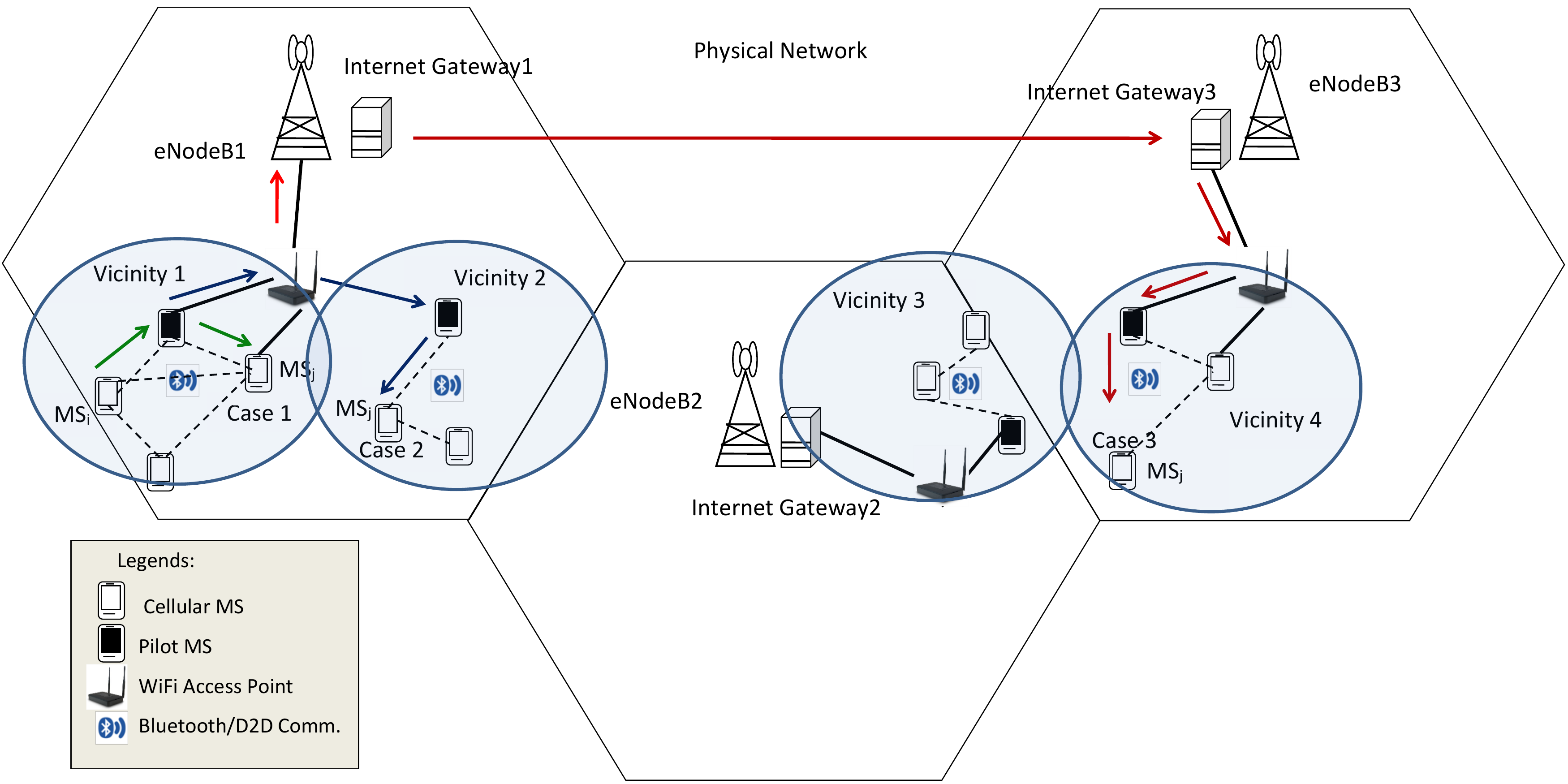}
    \caption{An Overlay with underlay connection in mobile P2P.}
    \label{fig:threeCom}
\end{figure*}

Each $MS_i$ can connect to its neighbor through Bluetooth (or other low range wireless interfaces) while $Pilot_i$ can communicate directly with its $eNodeB_i$.  The user of an MS may be motivated to volunteer the device for the role of a pilot through inducements such as discounted data tariff rate, high data transfer speed, etc. Each eNodeB maintains a list of selected pilots and their corresponding pilot overlay ID ($PID$) for the V-Chord ring. Table~\ref{OverNotation} summarizes the notations used in V-Chord paper.
\begin{table}

\caption{Nomenclature}
\centering  
\label{OverNotation}  
\begin{tabular}{|l|l|}

\hline \multicolumn{1}{|c|} {Notation} & \multicolumn{1}{|c|}{Definition} \\

\hline   $MS_i/MSID_i$ &    Mobile station $i$ and its  overlay identity \\ 
\hline
$Pilot_i/PID_i$ &  Pilot $i$ and its overlay identity \\ 
\hline
$Vicinity_i/VID_i$ &   Vicinity $i$ and its overlay identity \\ 
\hline
$eNodeB_i/BID_i$ &   eNodeB $i$ and its overlay identity \\ 
\hline
$P$ &   Number of pilots \\
\hline
$PID$ &   A shared file key \\
\hline
$US_i/i$ &   Value of user satisfaction $i$ and \\
                 &   rank of its preference  \\
\hline
\end{tabular} 
\end{table}
\subsection{File Retrieval}
The procedure retrieving a file using D2D communication without DHT overlay is presented in {\em case~0} of Figure~\ref{fig:DHTNonDHT}. In this case, an ($MS_i$) initiates the registration process by sending a request, containing its ID, status and traffic type to $eNodeB_i$.
The $eNodeB_i$ executes Algorithm~\emph{Algo-A} to allocate resources to $MS_i$ for communicating with another mobile station $MS_j$. Each eNodeB allocates suitable pairing resources for establishing communication between the MSs.

The retrieval of a file using D2D communication with DHT overlay is illustrated in Figure~\ref{fig:DHTNonDHT}.  The approaches to handle the different possible cases are described below.
\begin{enumerate}
\item[Case 1]  ($MS_i$ and $MS_j$ in the vicinity of a single pilot) 

In this case, $MS_i$ initiates the process by sending a search request with the ID, $FID$ of the file to $Pilot_i$. 
$Pilot_i$ locates $MS_j$ ($MSID_j$) having the ID with nearest prefix match for $FID$  using algorithm~{\em Algo-B}. The pilot forwards the request to $MS_j$ for transferring the file to $MS_i$. This communication can be carried out either on the Bluetooth link or the D2D link.

\item[Case 2] ($MS_i$ and $MS_j$  are in vicinity of a WiFi connected pilots) 

In this case, on receiving a request from $MS_i$, $Pilot_i$ uses algorithm \emph{Algo-C} for forwarding the request to the nearest prefix match for $Pilot_j$ ($PID_j$) within the WiFi range.
$Pilot_j$ runs algorithm \emph{Algo-D} for a look-up in its table for the IDs of MSs having the nearest prefix match for $FID$ (of the requested file) along with their status information and availability. If found, $Pilot_j$ fetches the data from $MS_j$ and forwards it to $Pilot_i$. After receiving the file, $Pilot_i$ transmits it to the requested station $MS_i$.

\item[Case 3] ($MS_i$ and $MS_j$ under inter region pilots)

If file requested by $MS_i$ is not available within the WiFi range of $Pilot_i$, it executes algorithm  \emph{Algo-E} and forwards the request to its corresponding $eNodeB_i$. In response to the request from $Pilot_i$, $eNodeB_i$ executes algorithm \emph{Algo-F} to determine an appropriate $eNodeB_j$. If an $MS_j$ ($MSID_j$) has closest prefix match with ID available in the neighborhood of $Pilot_j$ within $eNodeB_j$'s range, then it forwards the data to the requested $MS_i$.
\end{enumerate}
\begin{figure}[!t]
    \centering
    \includegraphics[width=5.0in]{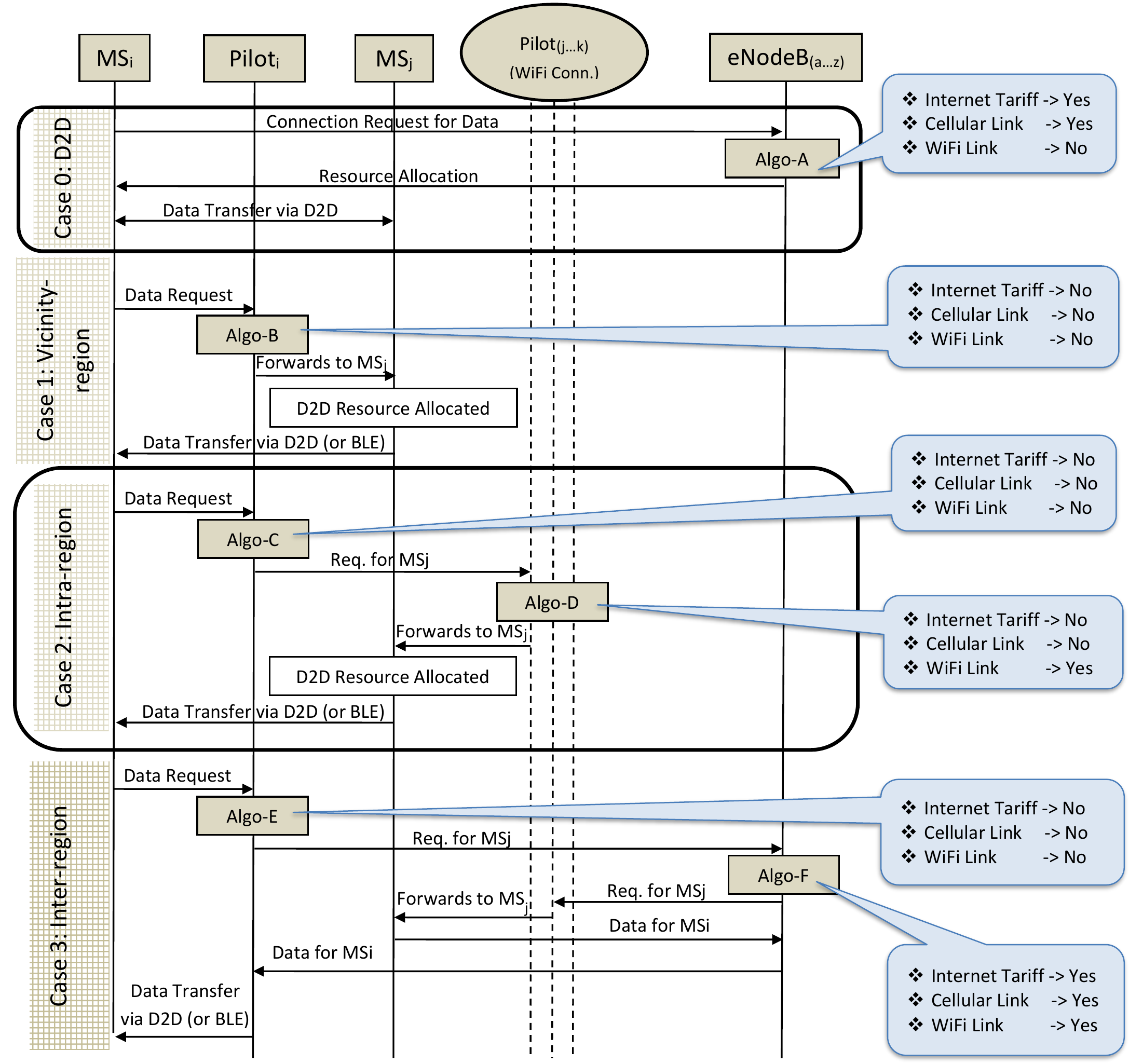}
    \caption{DHT and non-DHT communications under LTE-A network.}
    \label{fig:DHTNonDHT}
\end{figure}

It can be observed that in any of the three cases discussed above, most of the DHT super-peer based overlays would suffer from load balancing and peer association issues. To address these issues, there is a call for an efficient heuristic mechanism for assigning pilots or super-peers in  DHT-based overlays, specifically for resource sharing applications. In the next section, we propose an efficient pilot selection strategy using P-median approach and then, we explain our user satisfaction framework to analyze QoE in such overlays. 
\section{Efficient Pilot Selection Process} 
\label{effPilotSelection}
The most critical step in the realization of efficient D2D  communication is the selection process for the pilot MSs.  Algorithm~\ref{optPilot:1} employs the idea of facility allocation (P-Median) problem~\cite{church1974maximal} for the same. During the setup phase, each  $eNodeB$ executes $Assign(Pilot\_No)$. Subsequently, the pilots who exceed their capacities for serving requests for data in its neighborhood execute  Algorithm~\ref{optPilot:1}. The algorithm reassigns the MSs or the pilots and re-balances the load in the vicinity. 
For convenience in describing the algorithms, we have used a few notations as in Table~\ref{GSMNotations}. 
\begin{table}[!t]
\centering
\caption{Notations used to analyze and evaluate optimized pilot assignment.}
\label{GSMNotations}       
\begin{tabular}{|p{2.7in}|l|}

\hline \multicolumn{1}{|c|} {Description} & \multicolumn{1}{|c|}{Notation (Value)} \\
\hline
Size of shared data at each MS ($MS_i$).   &  $d_{i}$ \\
\hline
Size of shared data at each pilot ($Pilot_j$).   &  $d_{j}$ \\ 
\hline
Number of MSs under each region & $m$ \\
\hline
Distance from $Pilot_j$ to $MS_i$ & $h_{ij}$ \\ 
\hline
Number of eligible pilots under each WiFi connected Location & $e$ \\
\hline
Number of required pilots under a region & $P_{num}$ \\ 
\hline
Maximum data serving capacity for a pilot   & = $P_{cap}$ \\ 
\hline
Initial penalties of Lagrangian relaxation  for data capacity $P_{cap}$ &  $\lambda'_j$ (=1/$d_{j}$) \\
\hline
Initial penalties of Lagrangian relaxation  for $MS$ association with pilots &  $\mu'_i$ (= m) \\
\hline
A constant value on $k^{th}$ iteration for sub-gradient method &  $A^{(k)}$ \\
\hline
Decision variable for pilot assignment &  $Z_j$ \\
\hline
Decision variable for MS to pilot assignment &  $Y_{ij}$ \\
\hline
\end{tabular} 
\end{table}
\subsection{Pilot Assignment Process}
We introduce two decision variables ($Z_j$ and $Y_{ij}$) for assignment of required number of pilots ($P$) that may cover the region of participating $DHT$ $MSs$.
\begin{gather*}
Z_j=\left\{\begin{array}{lll}
                1 & \textnormal{if $Pilot_j$ is assigned}  \\
                0 & \textnormal{otherwise}
            \end{array}\right.
\quad
Y_{ij}=\left\{\begin{array}{lll}
                1 & \textnormal{if $MS_i$ is associated by $Pilot_j$} \\
                0 & \textnormal{otherwise}
            \end{array}\right.
\end{gather*}

The objective is to minimize the distance and data overload on each pilot. Thus, the objective can be formulated as
in Eqn.~\ref{herustic}
\begin{equation}
\mini_{\mathbf{(data, distance)}} \sum_j \sum_i d_ih_{ij}Y_{ij} \hspace{3ex}. 
\label{herustic}
\end{equation}

Subject to the following constraints.
\begin{eqnarray}
 \sum_j Z_j = P \hspace{6ex} \\  
 \sum_j Y_{ij} = 1 \hspace{2ex} \forall (MS_i) \hspace{6ex}\\ 
 \sum_i (d_i + d_j)Y_{ij} \leq P_{cap} \hspace{2ex} \forall (Pilot_j) \hspace{6ex} \\ 
 Y_{ij} = 0, 1 \hspace{2ex} \forall (Pilot_j) \hspace{6ex} \\ 
 Z_{j} = 0, 1 \hspace{2ex} \forall (MS_i,Pilot_j) \hspace{6ex} \\ 
 d_{i} \geq 0, \hspace{1ex} d_j \geq 0 \hspace{1ex} and \hspace{1ex} h_{ij} \geq 0 \hspace{6ex} 
\end{eqnarray}

Each constraint reflects a specific purpose. For example, 
Constraint (2) ensures that exactly $P$ pilots should be assigned.
Constraint (3) ensure that each MS is connected to exactly one pilot.
Constraint (4) ensures that the load (shared data) on each pilot is not higher than the pre-defined data serving capacity ($P_{cap}$). 
Constraints (5) and (6) are for appropriate settings of the decision variables. For example, $Y_{ij}$ is set to 1 or 0 depending on whether $MS_i$ is served by $Pilot_j$ or not.  
Similarly,  variable $Z_j$ is set to 0 or 1 depending on whether an MS$_j$ is selected as a pilot or not. 
Constraint (7) states that fixed variables (i.e., $d_i$, $d_j$ and $h_{ij}$) are positive integers.

In the formulation of constraints, variables $Z_j$ and $Y_{ij}$ are restricted to be either 0 or 1. So,  it is difficult to solve the objective function (Eqn.~\ref{herustic}) applying the linear programming technique. Therefore, we have used Lagrangian relaxation (LR) which removes some complicated constraints by adding a penalty to the objective function for each removal.

Let $\lambda=\{ \lambda_j, j \in [1,P]\}$ and $\mu=\{ \mu_i, i \in [1,m]\}$  are two sets of Lagrangian multiplier penalties associated with constraints (4) and (5) respectively. Thus, the corresponding Lagrangian function becomes  
\begin{eqnarray}
\begin{split} 
M(\lambda, \mu)= \sum_j \sum_i d_ih_{ij}Y_{ij} + 
              \sum_j \lambda_j (\sum_i (d_i + d_j)Y_{ij} -  P_{cap}) \\ + \sum_i \mu_i (1 - \sum_j Y_{ij}).
\label{relaxed}
\end{split}
\end{eqnarray}
\begin{figure}[!t]
    \centering
    \includegraphics[width=2.5in]{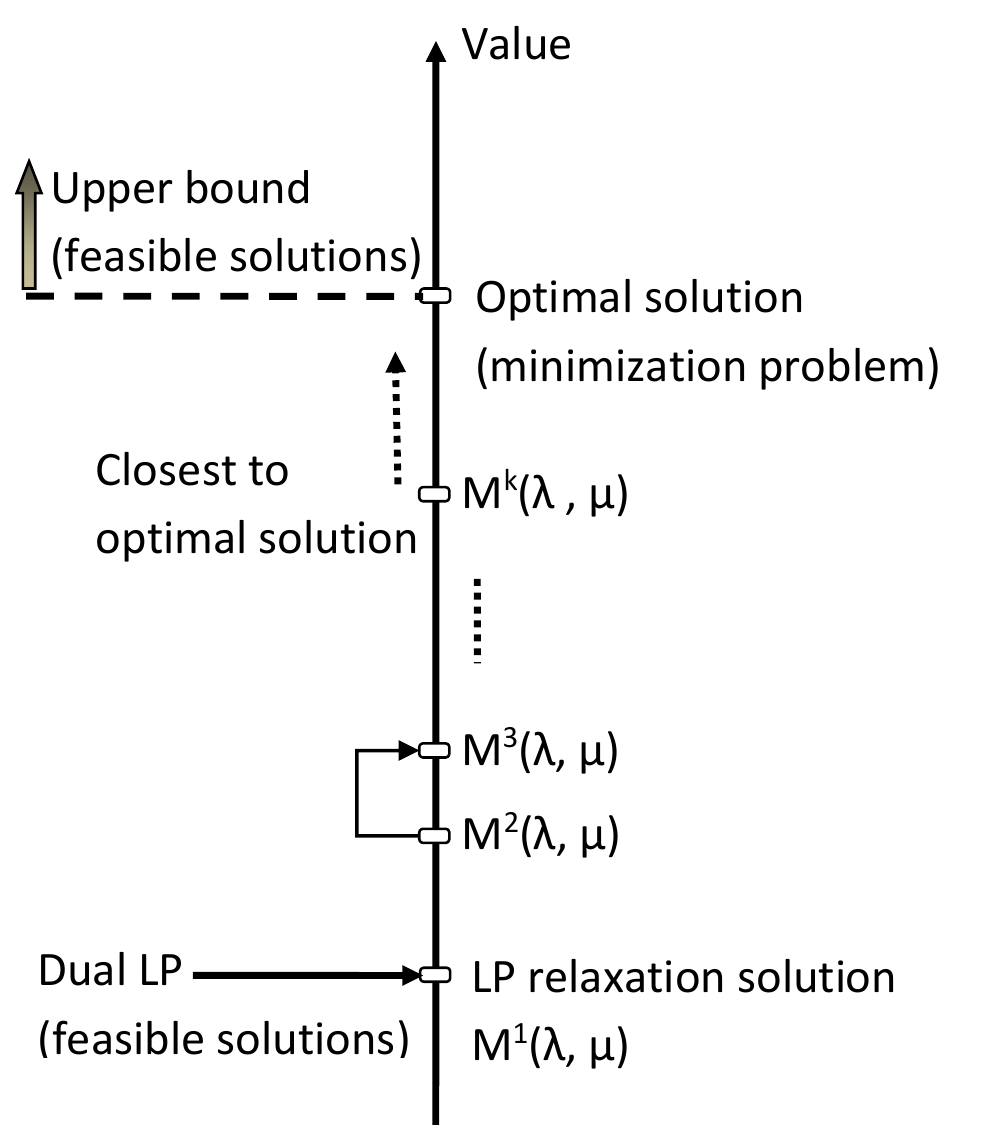}
    \caption{Lagrangian dual problem with sub-gradient solution.}
    \label{fig:LagSubgradient}
\end{figure}
Now, the Lagrangian dual objective function $M(\lambda, \mu)$ would be
\begin{equation}
 \max M(\lambda, \mu).
 \label{langMax}
\end{equation}

 Moreover, as $M(\lambda, \mu)$ is linear in $\lambda$
and $\mu$, the maximum of $M(\lambda, \mu)$ always exists,  and to obtain the best possible solution, we use sub-gradient optimization~\cite{polyak1969minimization}. 
Sub-gradient optimization is an iterative procedure in which  a new approximation for $k+1$ of the Lagrangian multipliers is chosen as
\begin{equation}
 M^{(k+1)}(\lambda, \mu)= \max \{ M(\lambda^1, \mu^1), M(\lambda^2, \mu^2), ..., M(\lambda^{k-1}, \mu^{k-1}), M(\lambda^k, \mu^k) \}.
\end{equation}

The solution of Eqn.~\ref{herustic} is an upper bound for Eqn.~\ref{langMax}.  
 $M^{(k+1)}(\lambda, \mu)$ would be optimal or near-optimal solution of Eqn.~\ref{herustic}, as shown in  Figure~\ref{fig:LagSubgradient}, for $k \rightarrow \infty$.

The overall optimization process is captured by Algorithm~\ref{optPilot:1}.
The step-size value $t^{(k)}$ is updated by Polyak-type step-size rule~\cite{polyak1969minimization}. Polyak proved that if $\epsilon < A^{(k)} \leq D$ for any fixed $\epsilon > 0$, the sub-gradient method is guaranteed to converge, where $D$ depends on the problem definition. 

\RestyleAlgo{boxruled}
\begin{algorithm}  [!th]
\normalsize
\caption{$Assign(P)$} \label{optPilot:1} 

\textbf{Initialize} $A^{(1)} = D $, $\lambda = \frac{1}{\sum_{j=1}^e d_j}$, $\mu= m$, 
$Z_j=1$ and $Y_{ij}=1$,
where $j \in [1, e]$ and $i \in (1, m)$  

\ForEach{ q = 1 : e} 
{  
$V_q = \sum_i min \{ 0, [d_ih_{iq} + \lambda_q(d_i + d_q) - \mu_i] \} $ 
}

\textbf{Select} P Pilots from $V[1,...,e]$, where $V_q$ is minimum 

\ForEach{ r = 1 : $P$} 
{
	\ForEach{ i = 1 : m}
	{
   		\If {($Z_r$ =1 and [$d_ih_{ir} + \lambda_r(d_i + d_r) - \mu_i] < 0$)}
   		{ 
        	\textbf{Assign} $MS_i$ to $Pilot_r$ (i.e., $ V_r$)
        }
   }   
}
 
\ForEach{ k = 1 : iteration}
{ 

\textbf{Compute} 
$t^{(k)}$
	 $= \frac{A^{(k)}(0.01M(\lambda, \mu))}{\sum_r^P \{\sum_i^m (d_i + d_r)Y_{ir} - P_{cap} \}^2 + \sum_i \{1 - \sum_r^P Y^{(k)}_{ir} \}^2}$
     

\ForEach{ s = 1 : P}
{
$\lambda^{(k+1)}_s = max \{0, [\lambda^{(k)}_s - t^k(\sum_i^m (d_i + d_s)Y^{(k)}_{is}- P_{cap})] \}$ 
}

\ForEach{ i = 1 : m}
{
$\mu^{(k+1)}_i = max \{0, [\mu^{(k)}_i - t^k (1 - \sum_j^P Y^{(k)}_{ij})] \}$ 
}

\If{ ($ |M(\lambda^{k+1}, \mu^{k+1}) - M(\lambda^{k}, \mu^{k})| \leq \delta$ )}     
	{
      $A = \frac{A}{2} $
      
      k=1
    }

}
\end{algorithm}

\subsection{Requirement for overlay optimization}
In pure LTE-A networks, QoE for file sharing application depends largely on network traffic. User satisfaction decreases as network traffic increases. The performance bottleneck due to data traffic can be avoided using D2D communication when supported by DHT-based overlays. The pilots handle much of the file sharing loads over short range wireless networks such as Bluetooth, WiFi. It reduces Internet usage on LTE-A networks. Therefore, user satisfaction is expected to increase substantially. However, without proper pilot selection procedure, the load distribution may be unbalanced, which may turn some of the selected pilots inactive due to fast drain out of energy.
\subsubsection{Motivating example}
We examine a scenario as shown in Figure~\ref{fig:OptSelecEx} to understand the energy consumption of pilots under V-Chord. In this case, eNodeB is situated at point (0, 0) with a transmission range of 250-meter. All D2D devices are placed under eNodeB with a transmission range of 20-meter. We assume that all the pilots have a uniform transmission range of $r = 20$-meter. 
An MS at position $(x_i, y_i)$ is a member of a group of MSs connected to a pilot at $(x,y)$ if ${(x_i - x)}^2 + {(y_i - y)}^2 \le r^2$. 

We evaluate different cases for selecting the pilots while forming a DHT overlay. In random selection (case 1) as shown in Figure~\ref{fig:OptSelecEx}, three $MS IDs$ (7, 4, 1) are selected randomly as pilots.  Pilot ID $PID_1$ has four ($MS IDs$ $MS_2$, $MS_3$, $MS_5$, and $MS_{10}$) D2D MS members. $PID_2$ has no D2D $MS$ within its range, while $PID_3$ has three ($MS_6$, $MS_8$ and $MS_9$) $MSs$. So, pilots $PID_1$, $PID_2$ and $PID_3$ serve data to associated MSs up to 1675, 250, and 575 MB respectively.    
\begin{figure*}[!t]
    \centering
    \includegraphics[width=4.5in]{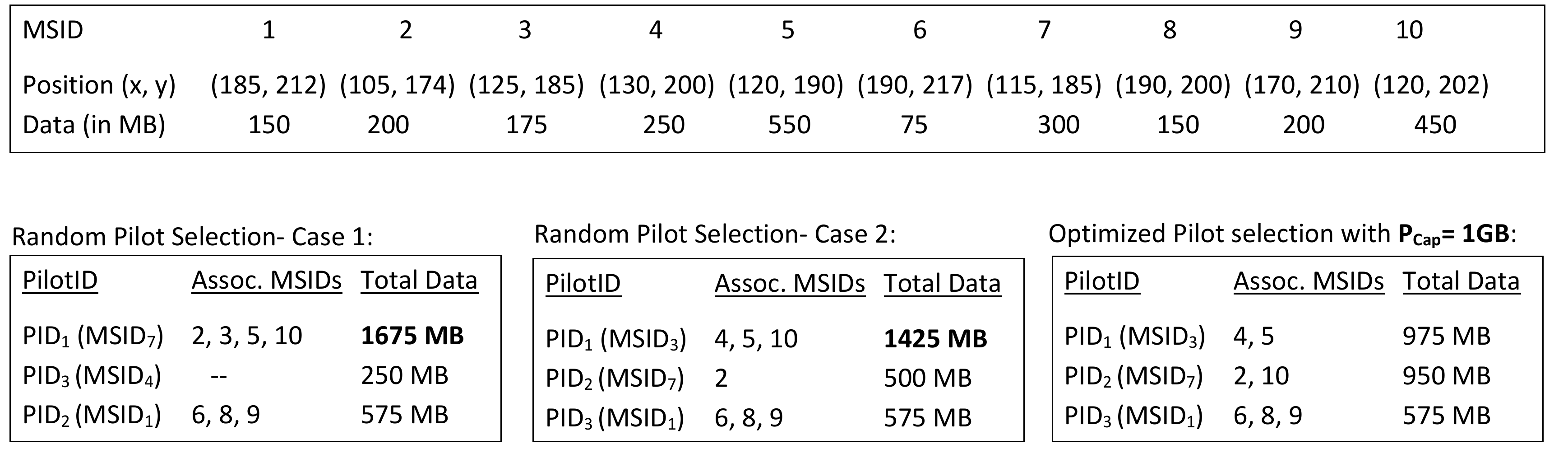}
    \caption{An example of optimized pilot selection where MSs = 10, ISD = 250-meter, D2D range = 20-meter.}
    \label{fig:OptSelecEx}
\end{figure*}

Figure~\ref{fig:OptSelecEx} illustrates another random selection of pilots where $MS_3$ is selected before $MS_7$. In this case, pilots $PID_1$, $PID_2$ and $PID_3$ serve data to associated $MSs$ up to 1425, 500 and 575 MB respectively. In both cases, the selection of pilots could not balance their data serving capacity. Unbalanced load (for data serving) distribution would cause higher energy drain outs from some pilots. 

The most simple solution would be to put a cap on data serving capabilities of each pilot. We experiment with an optimal pilot selection mechanism with $P_{cap}$ (= 1 GB), the load distribution becomes balanced as shown in Figure~\ref{fig:OptSelecEx}.
However, it can be viewed as per pilot optimized solution. However, it is highly inadequate as many MSs may not find a pilot nearby. 
Therefore, an optimal or near optimal placement of the pilots is required to provide high level of user satisfaction.
\subsection{Analyzing the proposed optimization technique}

The proposed pilot selection mechanism is 
simplified using Lagrangian Relaxation and analyzed its rate of convergence towards an optimal solution 
using MATLAB R2016a on a PC with Intel Core i5 processor with 8GB of main memory in Figure~\ref{fig:Opt-a}. The convergence rate of our simplified mechanism significantly depends on the initial assignment of $A^{(k)}$, which would be chosen according to Pearson's correlation coefficient. Pearson's correlation coefficient ($\sqrt{R^2}$)  is found to be 0.965 from (Figure~\ref{fig:Opt-b}). Further, the relation between the number of MSs ($m$) and $A^{(k)}$ is derived to be,  ($A^{(k)}= 0.017m - 2.9412$).

 %
\begin{figure}[!t]
    \centering
    \includegraphics[width=2.5in]{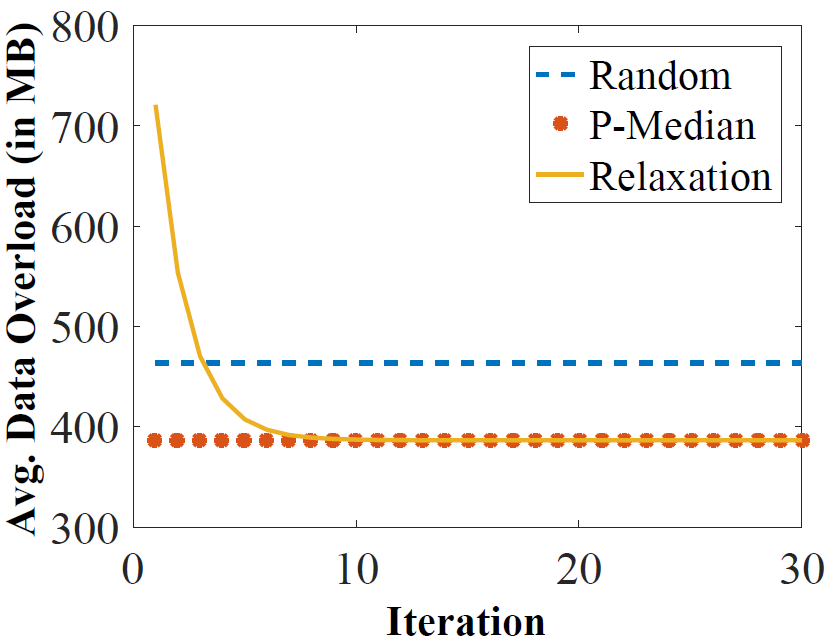}
    \caption{Convergence of simplified mechanism (using Lagrangian Relaxation) towards the optimized solution.}
    \label{fig:Opt-a}
\end{figure}
\begin{figure}[!t]
    \centering
    \includegraphics[width=3.0in]{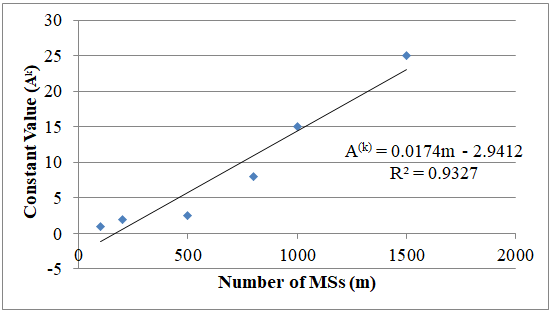}
    \caption{Pearson's correlation coefficient between MS and $A^{(k)}$.}
    \label{fig:Opt-b}
\end{figure}
\subsubsection{Evaluating on a real samrtphone}
We evaluated the performance of the proposed optimized pilot selection mechanism using an Android App based smartphone.
The App has been built on the Ionic framework using Cordova plug-in to run on a smartphone having android OS ~\cite{latif2016cross}.
The implementation is configured to optimize a vicinity of 20 MSs with sharing the capacity of (up to) 500 MB (for each MS). The optimized process performs for 250 times on
Motorola Nexus 6 smartphone having OS (Android 7.1.1), CPU (Quad-core 2.7 GHz Krait 450), internal memory (3 GB), and battery (3220 mAh).  The results are displayed in Figure~\ref{fig:CPUTotal-a} and ~\ref{fig:CPUTotal-b}. It is observed that the App requires 1.3 MB memory, 18 seconds CPU time with the energy consumption of 1 mAh.

   \begin{figure}[!ht]
   \centering
     \subfloat[App information regarding memory and battery consumption. \label{fig:CPUTotal-a}]{%
       \includegraphics[width=0.35\textwidth]{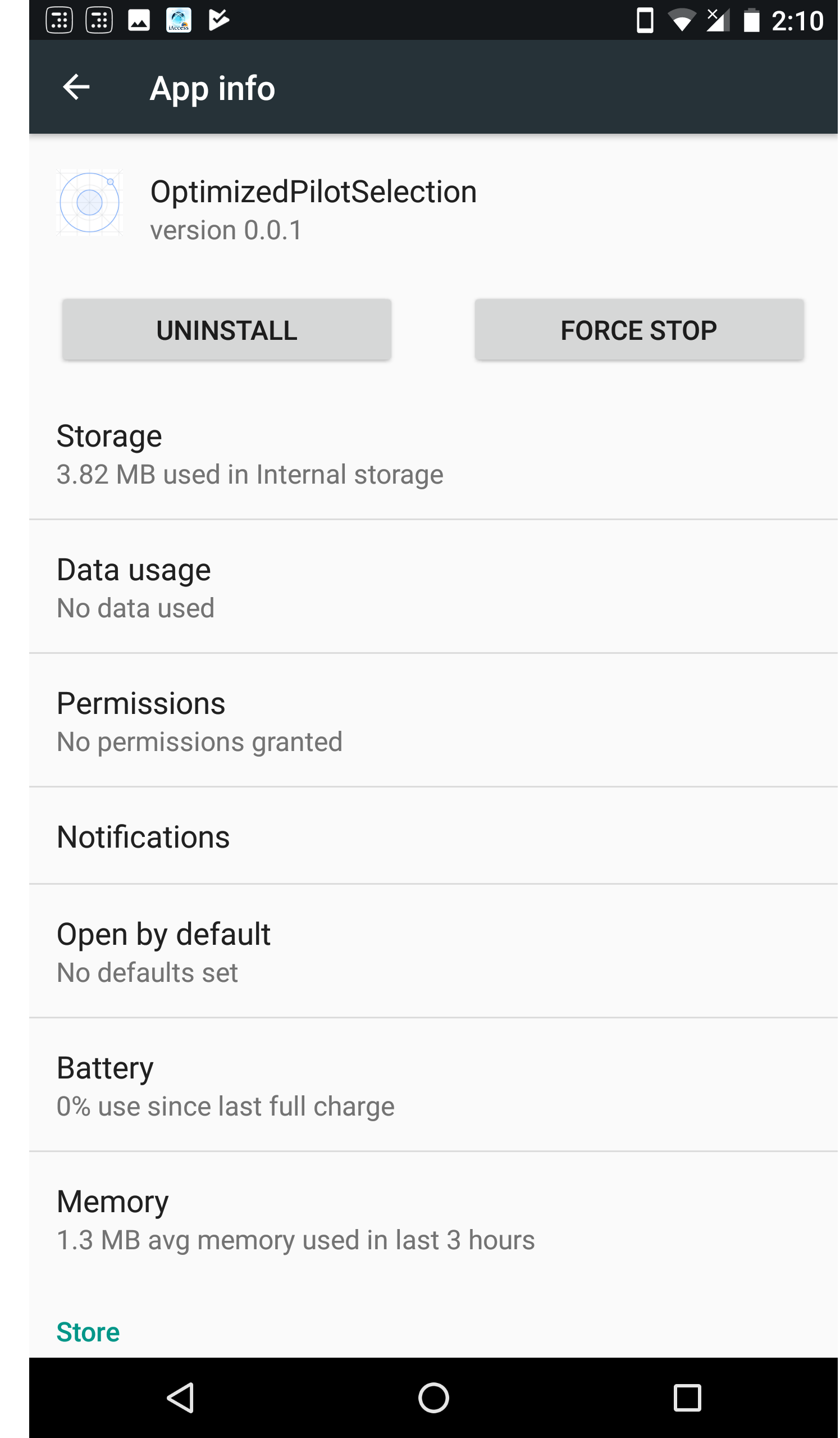}
     }
     \subfloat[Computational power to perform pilot optimization process up to 250 times. \label{fig:CPUTotal-b}]{%
       \includegraphics[width=0.35\textwidth]{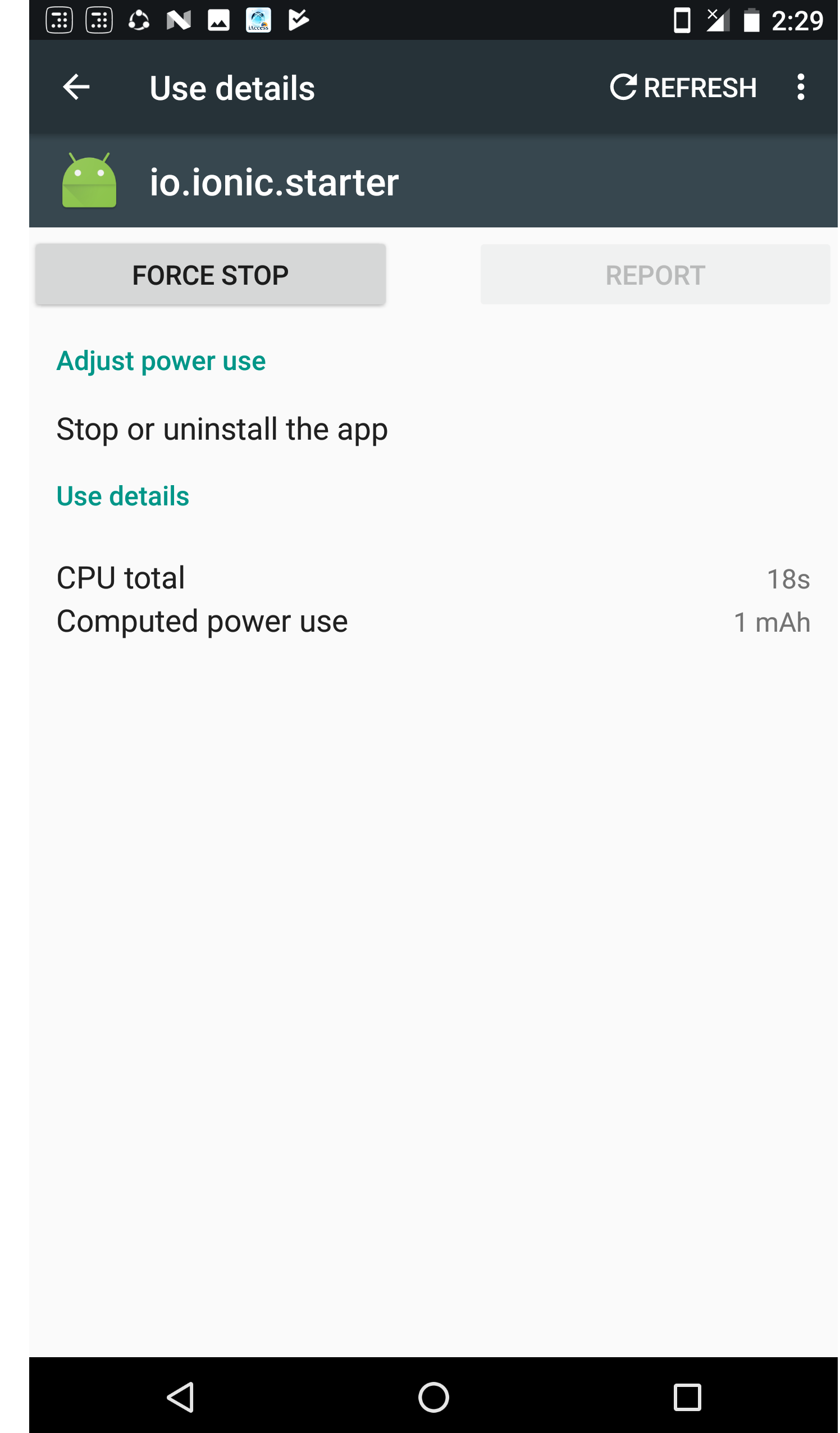}
     }
     \caption{Optimization process of pilots on a smartphone.}
     \label{fig:CPUTotal}
   \end{figure}

\subsection{Discussion}
According to our problem, $D$ would be [$0.017m - 2.9412$] (derived from Figure~\ref{fig:Opt-b}). 
To compute $t^{(1)}$  (in Algorithm~\ref{optPilot:1}), $A^{(1)}$ is initialized with $D$ and to compute $M(\lambda, \mu)$, $\lambda$ and $\mu$ are initialized with $\frac{1}{\sum_{j=1}^e d_j}$ and $m$ respectively. 
If the solution does not improve significantly, $A^{(k)}$ needs to be reduced (e.g., $A^{(k+1)} = 0.5A^{(k)}$). Thus, $A^{(k)} \geq A^{(k+1)}$. 

The overall time complexity of the proposed pilot assignment is $O(2kmP)$. However, we observe from our simulation that the algorithm converges with small $k$ (e.g., 8), so the complexity would be $O(mP)$.
\section{Quality Assessment}
\label{QuaAsse}
QoE is referred to as "Degree of the delight in using a service"~\cite{brunnstrom2013qualinet}. The perceived QoE may change in an end-to-end communication service, due to the dynamic temporal and spatial conditions (remaining battery energy, interest in the use of Internet service, a distance between smartphones, etc.) of mobile users. 
The communication requirements in QoE can be influenced by the content, the network, the device, the application, the context of use, the user expectations and the goals.

According to Sousa et al.~\cite{sousa2017survey} the estimation of QoE is influenced by several factors related to system, service, application, the user's preferences or context whose actual state or setting may have an influence on the QoE for the user. These factors can be categorized as a human, system, and contents based. A user's preferences are largely influence by attributes such as gender, age, audio-visual capabilities, etc. Sometimes, intellectual capabilities such as motivation, encouragement, spatial and temporal mood, etc., can also be critical and may affect the QoE. QoE also gets affected by the systems through which it is perceived, for example, multimedia streaming that goes through compression and decompression at multiple levels of transmission. Apart from the above two major factors, the other important factor is the external environment. These can be the temporal aspects, like the day of the weak or time of the day, duration of the content and its popularity, and service type, etc.~\cite{sousa2017survey}.

Although, QoE can depend on the aforementioned multiple factors, and therefore, may be measured by various qualitative and quantitative metrics. An application dependent QoE is quantified by a mean opinion score (MOS) value, which is calculated using the degree of subjective satisfaction for the end users of a particular application~\cite{liotou2014quality}.  In cellular networks, smartphone users satisfy with an application such as the effect of application on their smartphone components, tariff, battery consumption using the service such as file sharing. Table \ref{parametersTable} shows a few parameters that can be used to analyze smartphone in such applications.
\begin{table}
 
\caption{User Satisfaction Parameters.}
\centering    
\label{parametersTable} 
\begin{tabular}{|l|l|}
\hline \multicolumn{1}{|c|} {Preference} & \multicolumn{1}{|c|}{Parameter} \\

\hline   
1 &    Number of accessed files without using the Internet \\ 
\hline
2 &  Accessing time for all file chunks\\ 
\hline
3 &   Energy consumption of user’s mobile during file retrieval\\ 
\hline
4 &    Search rank file within vicinity  \\
\hline
5 &    Search file with keyword within vicinity  \\
\hline
6 &    Target file at single (or multi-hop) hop D2D  \\ 
\hline
7 &  Connecting time with target D2D \\ 
\hline
8 &  Associated pilot at single (or multi-hop) hop  \\ 
\hline
9 &  New peer joining time \\ 
\hline
\end{tabular} 
\end{table}

The MOS can be computed from the statistical analysis of the data provided by all users, and QoE of the service can be calculated using MOS values. But, MOS computation becomes complex as the number of application dependent metrics increases. Therefore, the calculation of QoE is challenging. 
A visual solution is to reduce the metric size. For example, Liotou et al.~\cite{liotou2014quality} uses only one parameter (Internet access) to reduce the complexity.

When more parameters are included it is difficult to determine their relative importances while eliminating from the metric to compute QoE. 
So, we decided to propose a light weighted framework to compute user satisfaction providing a preference for parameters. Further, we analyze the framework that how it helps to reduce the complexity.

\subsection{The proposed user satisfaction framework}   
\label{satframe}
This section describes the proposed user satisfaction framework. A user chooses an application based on a perceived satisfaction score for using the application. It is possible to define a priory range of values for a set of parameters which a user would find important. Then a user's feedback can be gathered by seeking a satisfaction score for each parameter from
the corresponding predefined absolute category rating (ACR) scale. ACR maps ratings between Bad and Excellent to numerical values within the range 1 through 5 ~\cite{ITU:xxx}.
However, in this work, we modify the rating values within the range $[-2, +2]$. More specifically, the value for a parameter $i$ is denoted by $US_i\in\{ 2\ (Excellent), 1\ (Good), 0\ (Satisfactory), -1\ (Poor), or -2\ (Bad)\}$. The primary reason behind the proposed modification is to obtain a smooth convex surface with its minima at 0 that helps in the optimization process, and reduces the requirement to compute large values. Next, we define the overall user satisfaction score ($US_{overall}$) as follows: 
\begin{equation}
US_{overall}= \frac{\sum_{i=1}^{\Bbbk } US_i/i}{\sum_{i=1}^{\Bbbk} 1/i}, 
\label{USEq}
\end{equation}
where $\Bbbk$ denotes the number of parameters chosen by the user about
an application. 
The overall satisfaction depends on the chosen parameters (see Table~\ref{parametersTable}) and lies between -2 and 2, i.e., $ -2  \leq US_{overall} \leq  2$. 
We analyze the behavior of Eqn.~\eqref{USEq} with the help of random harmonic distribution~\cite{schmuland2003random}, to determine the acceptance (or rejection) of the application without computing large MOS value for a small and broad set of user satisfaction parameters.

\subsubsection{Small set of user satisfaction parameters}

\begin{prop} \label{pro:1} If value of each  $US_i$ ($ 1 \leq i \leq \lfloor \frac{\Bbbk}{2} \rfloor$) is 2 (-2), $US_{overall}$ QoE score becomes positive (negative). 
\end{prop}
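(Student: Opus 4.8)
The plan is to reduce the problem to a sign determination. Since the denominator $\sum_{i=1}^{\Bbbk} 1/i$ is strictly positive, $US_{overall}$ shares the sign of its numerator $N \coloneqq \sum_{i=1}^{\Bbbk} US_i/i$, so it suffices to determine the sign of $N$. For the positive case I would fix $US_i = 2$ for $1 \le i \le n$, where $n \coloneqq \lfloor \Bbbk/2 \rfloor$, and note that the remaining ratings are unconstrained beyond $US_i \ge -2$. As each such term enters $N$ with the positive weight $1/i$, the numerator is minimized by setting $US_i = -2$ for every $i > n$. Hence it is enough to prove positivity of $N$ in this worst case, i.e. $N = 2\big(\sum_{i=1}^{n} 1/i - \sum_{i=n+1}^{\Bbbk} 1/i\big) > 0$.

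The core of the argument is then the arithmetic inequality $\sum_{i=1}^{n} 1/i > \sum_{i=n+1}^{\Bbbk} 1/i$, which I would obtain from the monotonicity of the harmonic terms by a pairing argument. When $\Bbbk = 2n$ is even, the two sums have the same number of terms, and I pair the left term $1/i$ with the right term $1/(n+i)$; since $1/i > 1/(n+i)$ for every $i$, summing these strict inequalities yields the claim at once.

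The odd case $\Bbbk = 2n+1$ is the main obstacle, because the right sum now carries one extra term $1/(2n+1)$ and the straightforward pairing leaves it unmatched. I would resolve this by rewriting the gap as $\sum_{i=1}^{n}\big(1/i - 1/(n+i)\big) - 1/(2n+1)$ and bounding it below: each summand equals $n/(i(n+i)) > 0$, so the sum is at least its $i=1$ term $n/(n+1)$, and $n/(n+1) > 1/(2n+1)$ is equivalent to $2n^2 > 1$, which holds for every $n \ge 1$. This covers all $\Bbbk \ge 3$; the value $\Bbbk = 2$ falls under the even argument, while $\Bbbk \le 1$ makes the hypothesis vacuous. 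This completes the positive case, and the negative case follows immediately from the sign-flip symmetry $US_i \mapsto -US_i$, which sends $N$ to $-N$.
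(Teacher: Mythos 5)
Your proposal is correct and structurally the same as the paper's proof: both reduce the claim to the sign of the numerator $\sum_{i=1}^{\Bbbk} US_i/i$, take the worst case $US_i=-2$ on the unconstrained indices, and thereby reduce everything to the harmonic-sum inequality $\sum_{i=1}^{\lfloor \Bbbk/2\rfloor} 1/i > \sum_{j=\lceil (\Bbbk+1)/2\rceil}^{\Bbbk} 1/j$, with the negative case obtained by symmetry. The only difference is completeness: the paper asserts that inequality without justification, whereas you actually prove it --- pairing $1/i$ with $1/(n+i)$ in the even case, and in the odd case dominating the unmatched term via $n/(n+1) > 1/(2n+1)$ --- so your write-up fills in precisely the step the paper leaves implicit.
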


\begin{proof} 
Considering the value of $US_i \in \{2\}$ $\forall (1 \leq i \leq \lfloor \frac{\Bbbk}{2} \rfloor )$.

$\sum_{i=1}^{\Bbbk} \frac{US_i}{i}$
= $\sum_{i=1}^{\lfloor \frac{\Bbbk}{2} \rfloor} \frac{US_i}{i} $ + $\sum_{j= \lceil \frac{\Bbbk+1}{2}\rceil }^{\Bbbk} \frac{US_i}{j} >$ 
$2\sum_{i=1}^{\lfloor \frac{\Bbbk}{2} \rfloor} \frac{1}{i} $ - 2$\sum_{j= \lceil \frac{\Bbbk+1}{2}\rceil }^{\Bbbk} \frac{1}{j} >$ 0.

Therefore, $US_{overall} > $  0. 

Similarly, we can show that $US_{overall} <$ 0, when $US_i$ = -2 $\forall (1 \leq i \leq \lfloor \frac{\Bbbk}{2} \rfloor )$.
\end{proof}

\begin{prop} If value of each  $US_i$ ($1 \leq i \leq \lceil \frac{\Bbbk}{2}  \rceil$) is $ \in \{1, 2\}$  ($ \in \{-1, -2\}$), $US_{overall}$ QoE score becomes positive (negative). 
\end{prop}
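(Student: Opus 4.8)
The plan is to mirror the proof of Proposition~\ref{pro:1}. Since the denominator $\sum_{i=1}^{\Bbbk} 1/i$ of Eqn.~\eqref{USEq} is a strictly positive harmonic sum, the sign of $US_{overall}$ equals the sign of the numerator $\sum_{i=1}^{\Bbbk} US_i/i$, so it suffices to bound that numerator from below. Writing $n = \lceil \Bbbk/2 \rceil$, I would split the numerator at index $n$, apply the hypothesis $US_i \ge 1$ on the first block, and use the worst-case floor $US_j \ge -2$ of the rating scale $[-2,2]$ on the remaining block. With $H_N := \sum_{i=1}^{N} 1/i$ this gives
\begin{equation*}
\sum_{i=1}^{\Bbbk} \frac{US_i}{i} \;\ge\; \sum_{i=1}^{n} \frac{1}{i} - 2\sum_{j=n+1}^{\Bbbk} \frac{1}{j} \;=\; 3H_n - 2H_{\Bbbk}.
\end{equation*}
Thus the whole claim reduces to establishing $3H_n > 2H_{\Bbbk}$.

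This is where the argument genuinely departs from Proposition~\ref{pro:1}. There the guaranteed block carried coefficient $2$ (because $US_i=2$), so the reduction was to the easy term-by-term comparison $\sum_{i\le\lfloor\Bbbk/2\rfloor}1/i > \sum_{j>\lfloor\Bbbk/2\rfloor}1/j$. Here the guaranteed block carries only coefficient $1$, so I must dominate \emph{twice} the tail, and no term-by-term pairing delivers that factor of $2$. The clean route is an integral estimate: since $1/x$ is decreasing, its right Riemann sum underestimates the integral, so (using $\Bbbk\le 2n$ to extend the range) $H_{\Bbbk}-H_n = \sum_{j=n+1}^{\Bbbk}\tfrac1j \le \sum_{j=n+1}^{2n}\tfrac1j < \int_n^{2n}\tfrac{dx}{x} = \ln 2$. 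Hence $2H_{\Bbbk} < 2H_n + 2\ln 2$, and it remains only to verify $H_n > 2\ln 2 \approx 1.386$. Because $H_n$ is increasing and $H_2 = 3/2 > 2\ln 2$, this holds for every $n\ge 2$, i.e.\ for every $\Bbbk\ge 3$; the lower bound above is then strictly positive, forcing $US_{overall}>0$.

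The main obstacle — and the step I would treat most carefully — is the boundary case $\Bbbk=2$, where $n=1$ and the lower bound collapses to $3H_1-2H_2 = 3-3 = 0$. Indeed the assignment $US_1=1,\,US_2=-2$ yields $US_{overall}=0$ exactly, so strict positivity genuinely fails there; the statement should therefore be read as ``nonnegative'' at $\Bbbk=2$, or the implicit hypothesis $\Bbbk\ge 3$ should be made explicit. For all $\Bbbk\ge 3$ the estimate is strict, as shown. Finally, I would obtain the negative half of the statement by the symmetry $US_i \mapsto -US_i$, which negates both the numerator and $US_{overall}$ and carries the hypothesis $US_i\in\{1,2\}$ to $US_i\in\{-1,-2\}$; so proving the positive case and invoking this symmetry completes the argument.
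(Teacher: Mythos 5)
Your proof is correct for $\Bbbk \ge 3$, but it takes a genuinely different route at the key step, and one of your side remarks about what is possible is mistaken. The initial reduction is identical to the paper's: both you and the paper lower-bound the numerator of $US_{overall}$ by putting $1$ on the guaranteed block and $-2$ on the tail. Where you then invoke the integral estimate $H_{\Bbbk}-H_n \le \sum_{j=n+1}^{2n} \frac{1}{j} < \ln 2$ together with $H_n \ge H_2 = \frac{3}{2} > 2\ln 2$ (writing $H_N = \sum_{i=1}^{N} \frac{1}{i}$ and $n = \lceil \Bbbk/2 \rceil$), the paper instead re-indexes the tail and pairs terms: in the even case $\Bbbk = 2n$ the bound becomes $\sum_{i=1}^{n}\bigl(\frac{1}{i}-\frac{2}{n+i}\bigr)$, and each summand equals $\frac{n-i}{i(n+i)} \ge 0$; in the odd case the same pairing leaves an extra positive term $\frac{1}{(\Bbbk+1)/2}$. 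So your assertion that ``no term-by-term pairing delivers that factor of $2$'' is wrong: the pairing $i \leftrightarrow n+i$ does deliver it, since $\frac{1}{i} \ge \frac{2}{n+i}$ exactly when $i \le n$, and this is precisely the paper's argument. The trade-off is that the paper's route is purely algebraic but needs an even/odd case split, while yours is parity-uniform and produces a quantitative margin $H_n - 2\ln 2$, at the price of a calculus estimate. Finally, your handling of $\Bbbk = 2$ is more careful than the paper's: there the pairing sum is $1 - \frac{2}{2} = 0$, the paper's claimed strict inequality fails, and $US_1 = 1$, $US_2 = -2$ indeed gives $US_{overall} = 0$, so the proposition as stated is false at $\Bbbk = 2$; the paper silently asserts positivity, whereas you correctly flag this boundary case and note the statement needs either $\Bbbk \ge 3$ or the weaker conclusion ``nonnegative.''
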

\begin{proof}
Considering the value of $US_i \in \{1, 2\}$ $\forall (1 \leq i \leq \lceil \frac{\Bbbk}{2} \rceil )$. 

Case (i): When $\Bbbk$ is even.

$\sum_{i=1}^{\Bbbk} \frac{US_i}{i}$  $> \sum_{i=1}^{\frac{\Bbbk}{2}} \frac{1}{i} $ - 2$\sum_{j= \frac{\Bbbk}{2}+1 }^{\Bbbk} \frac{1}{j} $ = $\sum_{i=1}^{\frac{\Bbbk}{2}} (\frac{1}{i} - \frac{2}{\frac{\Bbbk}{2} + i}) > $ 0.

Case (ii): When $\Bbbk$ is odd.

$\sum_{i=1}^{\Bbbk} \frac{US_i}{i}$  $> \sum_{i=1}^{\frac{\Bbbk+1}{2}} \frac{1}{i} $ - 2$\sum_{j= \frac{\Bbbk + 3}{2} }^{\Bbbk} \frac{1}{j} $ = $\sum_{i=1}^{\frac{\Bbbk - 1}{2}} (\frac{1}{i} - \frac{2}{\frac{\Bbbk+2}{2} + i})$  + $\frac{1}{\frac{\Bbbk + 1}{2}} >$ 0.

Therefore, in both cases $US_{overall} > $ 0.

Similarly, we can show that $US_{overall} < $ 0, when $US_i \in \{-1, -2\} $ $\forall (1 \leq i \leq \lceil \frac{\Bbbk}{2} \rceil )$.

\end{proof}
\subsubsection{Large set of user satisfaction parameters}
\begin{prop} If feedback value of large set of  user satisfaction are equally distributed (not equally distributed), the application should be accepted (difficult to judge). 
\end{prop}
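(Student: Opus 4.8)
The plan is to make the informal hypothesis precise by modelling the user's feedback as a sequence of independent ratings and then reading Equation~\eqref{USEq} as a ratio whose denominator is the divergent harmonic sum $H_{\Bbbk}=\sum_{i=1}^{\Bbbk}1/i$. Concretely, I would treat \textbf{equally distributed} feedback as the case in which the $US_i$ are i.i.d.\ and symmetric on $\{-2,-1,0,1,2\}$, so that $E[US_i]=0$; the \textbf{not equally distributed} case is then any biased law with $E[US_i]=\mu\neq 0$. Under this reading the numerator $N_{\Bbbk}=\sum_{i=1}^{\Bbbk} US_i/i$ is exactly a (scaled) random harmonic series of the type analysed in~\cite{schmuland2003random}, which is what motivates that citation.

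First I would treat the balanced case. Since $E[US_i]=0$ and
\[
\sum_{i=1}^{\infty}\operatorname{Var}\!\Big(\frac{US_i}{i}\Big)=\sigma^2\sum_{i=1}^{\infty}\frac{1}{i^2}=\frac{\sigma^2\pi^2}{6}<\infty,
\]
Kolmogorov's three-series criterion (equivalently, the almost-sure convergence of the random harmonic series established in~\cite{schmuland2003random}) shows that $N_{\Bbbk}$ converges almost surely to a finite random variable $N_\infty$ with a symmetric, bounded density. Because the denominator $H_{\Bbbk}\to\infty$, it follows that
\[
US_{overall}=\frac{N_{\Bbbk}}{H_{\Bbbk}}\xrightarrow[\Bbbk\to\infty]{}0 \quad\text{a.s.},
\]
so for a large balanced parameter set the overall score settles at the neutral value $0$, i.e.\ ``satisfactory''. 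Invoking the symmetry of $N_\infty$ (and the near-uniform density computed by Schmuland) I would then argue that $\Pr[US_{overall}\ge 0]\to \tfrac12$ while $|US_{overall}|$ concentrates within any $\varepsilon$ of $0$; hence the application sits at the acceptance threshold and ``should be accepted''.

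Next I would handle the unbalanced case by the same decomposition. Writing $N_{\Bbbk}=\mu H_{\Bbbk}+\sum_{i=1}^{\Bbbk}(US_i-\mu)/i$ and noting that the centred tail again converges almost surely (same variance bound), division by $H_{\Bbbk}$ gives $US_{overall}\to\mu$. Since $\mu$ can be of either sign and of arbitrary magnitude depending on the unknown feedback law, no fixed verdict is possible from the asymptotics alone --- this is the precise sense in which the outcome is ``difficult to judge'', and it also connects Proposition~\ref{pro:1} and Proposition~2, which give the corresponding deterministic guarantees for structured (non-random) feedback patterns.

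The step I expect to be the main obstacle is not the analytic core --- the convergence argument is standard once the i.i.d.\ symmetric model is fixed --- but pinning down the decision rule so that ``should be accepted'' is a theorem rather than a convention: the limit $0$ is the boundary between accept and reject, so any rigorous claim must either adopt the convention that $US_{overall}\ge 0$ (satisfactory) counts as acceptance, or quantify, via the explicit tail bounds on the random harmonic density from~\cite{schmuland2003random}, the probability that $US_{overall}$ falls below $-\varepsilon$ at finite $\Bbbk$ and show it is negligible. Making that quantitative statement uniform in $\Bbbk$, rather than only in the $\Bbbk\to\infty$ limit, is where the real work lies.
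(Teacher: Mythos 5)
Your balanced case is essentially the paper's own argument: both of you verify the Kolmogorov three-series conditions (equivalently, invoke the random harmonic series of~\cite{schmuland2003random}) to get almost-sure convergence of the numerator, then divide by the divergent harmonic denominator to conclude $US_{overall}\to 0$ and hence acceptance. Where you genuinely depart from the paper is the unbalanced case, and your route is both different and sounder. The paper argues that for a biased law the three-series conditions fail for $S=\sum_i US_i/i$, concludes $S$ does not converge, and then asserts that $US_{overall}$ itself ``diverges towards $-\infty$ or $\infty$'' --- which cannot be literally true, since $US_{overall}$ is a weighted average of values in $[-2,2]$ and is therefore bounded. Your decomposition $N_{\Bbbk}=\mu H_{\Bbbk}+\sum_{i}(US_i-\mu)/i$, with the centred tail converging almost surely by the same variance bound, yields the correct statement: $US_{overall}\to\mu$ almost surely, and the verdict is indeterminate only because the sign and magnitude of $\mu$ are unknown. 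This buys an actual limit theorem for the quantity the proposition is about, where the paper's case~(ii) only describes the behaviour of the numerator; the qualitative conclusion (``difficult to judge'') survives under both readings. Two caveats: your identification of ``not equally distributed'' with $\mu\neq 0$ is a modelling choice --- a non-uniform but mean-zero law still gives $US_{overall}\to 0$, an edge case the paper's divergence claim also silently excludes --- and your closing observation that ``should be accepted'' requires a decision convention at the boundary value $0$ points to a gap in the proposition's statement rather than in either proof; the paper simply asserts acceptability without addressing it.
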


\begin{proof}

\begin{itemize}
\item[ ]Let S = $US_1 + \frac{US_2}{2} + \frac{US_3}{3}+ ...  + \frac{US_i}{i}  + ... $,
\end{itemize}
where $US_i$ are independent random variables, which may or may not be distributed equally.
In this article, we study and analyze equally distributed or unequally distributed of a large set of random variable $US_i$, which helps users to decide whether an application should be accepted.   
Our analysis covers both cases.
According to Kolmogorov three-series theorem~\cite{wiki:xxx},  the series $\sum_{i=1}^{\infty}X_i$ (where, $(X_i)_{i \in N}$ are independent random variables) converges   $\mathbb{R}$ if and only if the following three conditions (C1, C2 and C3) hold for some $A > 0$ :  
\begin{itemize}
\item[(C1)]  $\sum_{i=1}^{\infty} Pr(|X_i| \geq A) $ converges.
\item[(C2)] Let $Y_i := X_i.1_{\{|X_i| \leq A\}}$, then  $\sum_{i=1}^{\infty} E(Y_i)$, the series of expected values of $Y_i$, converges.
\item[(C3)]  $\sum_{i=1}^{\infty} var(Y_i)$ converges. 
\end{itemize}
where, $Pr(X)$, $var(Y)$ and $E(Y)$ are denoted as probability of $X$, variance and expected values of $Y$ respectively.

Case (i): Let user satisfaction values be equally distributed and
$US_i$s are distributed with uniform distribution 
$Pr(US_i=-2)$ 
= $Pr(US_i=-1)$
= $Pr(US_i=0)$
= $Pr(US_i=1)$ 
= $Pr(US_i=2)$ 
=  $\frac{1}{5}$.
Therefore, $E(S)$ = 0 and 
\begin{equation}
E(S^2)= \sum_{i=1}^{\infty} \frac{E(US_i^2)}{i^2} = \frac{10}{5}(\frac{\pi^2}{6}) = \frac{\pi^2}{3}.
\end{equation}
Since $\frac{US_i}{i}$ are random variables, then

\begin{itemize}
\item[(C1)] $\sum_{i=1}^{\infty} Pr( |\frac{US_i}{i}| \geq 3)$ = 0.
\item[(C2)] If $Y_i$ := $\frac{US_i}{i}.1_{\{|\frac{US_i}{i}| \leq 3\}}$ , then
  \begin{itemize}
      \item[ ] $Y_i= \frac{US_i}{i}$ and
    \item[ ] E($Y_i$) = $\sum_{i=1}^{\infty} E(Y_i)$ = $\sum_{i=1}^{\infty} E(\frac{US_i}{i})$ = 0.  
  \end{itemize}

\item[(C3)] $\sum_{i=1}^{\infty} var(Y_i)$ = $\sum_{i=1}^{\infty} var(\frac{US_i}{i})$ 
   \begin{itemize}
   \item[ ] = $\sum_{i=1}^{\infty} \frac{1}{i^2} var(US_i)$
   \item[ ] = $\sum_{i=1}^{\infty} \frac{1}{i^2} ( E(US_i^2) - [E(US_i)]^2)$
   \item[ ] = $\sum_{i=1}^{\infty} \frac{1}{i^2}.2$ =  $\frac{\pi^2}{3}$. 
   \end{itemize}
\end{itemize}

Therefore, all three conditions of Kolmogorov three-series theorem holds.
This implies $\sum_{i=1}^{\infty} \frac{US_i}{i}$  almost surely converge. 
Since, $\sum_{i=1}^{\infty} \frac{1}{i}$ diverges towards $+ \infty$, therefore, $US_{overall} \rightarrow 0$.
It shows that the equally distributed feedback values for different parameters of an application should be acceptable by users.  

Moreover, from ~\cite{schmuland2003random}, we conclude that
\begin{itemize}
\item[(a)] Probability of S having very large value is very small, but it is never 0.
\item[(b)] Distribution of S has full support on the real line, so, there is no theoretical upper bound or lower bound of S. So range of S is $\mathbb{R}$ U $\{-\infty, \infty\} $.   
\end{itemize}

Case (ii): Let user satisfaction values be not equally distributed and the satisfaction values of different parameters are provided with different probabilities.
Suppose,
 $Pr(US_i=-2)$ = $Pr1$, 
 $Pr(US_i=-1)$ = $Pr2$, 
 $Pr(US_i=0)$ = $Pr3$, 
 $Pr(US_i=1)$ = $Pr4$, 
 $Pr(US_i=2)$ = $Pr5$,
where ($Pr1\ +\ Pr2 +\ Pr3\ +\ Pr4\ +\ Pr5$) = 1.
If $A \subseteq [3, \infty)$, then condition (C2) of Kolmogorov three-series does not hold always as 

\begin{itemize}
\item[ ] E($Y_i$) = E($\frac{US_i}{i}$) 
= $(2.Pr5+Pr4-Pr2-2.Pr1)\sum_{i=1}^{\infty} \frac{1}{i}$.
\end{itemize}

Therefore, $E(Y_i)$ diverges if
\begin{itemize}
\item[ ] $(2.Pr5+Pr4-Pr2-2.Pr1) \neq 0$.
\end{itemize}

However, if $A \subseteq (-\infty, 3)$, then  (C1) does not hold as

\begin{itemize}
\item[ ] $\sum_{i=1}^{\infty} Pr(\frac{US_i}{i} < 3) $ = $ \sum_{i=1}^{\infty} 1 \rightarrow $ diverges.
\end{itemize}

Thus S= $\sum_{i=1}^{\infty} \frac{US_i}{i} $ does not converge.
In this case, $US_{overall}$ diverges towards $- \infty$ or $\infty$; therefore, it is difficult to judge the acceptability of the application.
\end{proof}

\section{Simulation setup and results}
\label{sim}
We used Vienna LTE-Advanced open source system level simulator~\cite{ikuno2010system} to verify the proposed user satisfaction framework and optimized pilot selection mechanism. The simulator uses the object-oriented programming paradigm with a modular code structure allowing us to add extra modules with basic input parameters as described in Table~\ref{SimSetup}.

\begin{table}[!t]
\centering
\caption{Input parameters and their values.}
\label{SimSetup}       
\begin{tabular}{|p{2.2in}|l|}

\hline \multicolumn{1}{|c|} {Parameter} & \multicolumn{1}{|c|}{Value} \\
\hline
Inter-site distance (ISD)   &  250-meter\\
\hline
Number of Single Carrier   &  600\\
\hline
Number of Resource Block ($RB_{num}$) & 100 \\
\hline
Total Power $POW_{total}$ & -10 dB W  \\
\hline
Circuit Power Consumption ($CPOW_{con}$) & 0.05 W  \\
\hline
Carrier Frequency ($C_{freq}$) & 2.15 GHz  \\
\hline
Tuning Step ($T$)   & $20/(\pi*1500^2)$ \\
\hline
Path-loss Parameter ($\alpha$) &  3.5 \\
\hline
Number of Pilots ($P_{num}$) &  10 \\
\hline
Number of D2D users ($D2D_{num}$)     &  100 \\
\hline
Number of user satisfaction parameters ($US_{param}$)  &  3, 6, 9 \\
\hline
Vicinity Size ($V_{size}$)  &  $\lceil (1+( D2D_{num} - P_{num})/P_{num})  \rceil$ \\
\hline
Number of Resource Block per D2D user ($RB_{d2dNum}$)  &  $RB_{num}/D2D_{num}$ \\
\hline
Vicinity radius ($V_{radius}$)  &  ISD/10 meters \\
\hline
\end{tabular} 
\end{table}

\subsection{User Satisfaction Parameters}
We implemented a look-up cost model of V-Chord~\cite{tetarave2018v} using the modified D2D simulator to
calculate the MOS values corresponding to user satisfaction parameters
mentioned in Table~\ref{parametersTable}. In this model, the look-up process for DHT file sharing application is categorized into three categories, namely, inter-region, intra-region, and vicinity-region as depicted in Figure~\ref{fig:DHTNonDHT}.

Inter-region look-up uses the Internet or cellular links through an Internet gateway. Access a file located in a different region consumes more battery power in a mobile device. Therefore, a user's objective would be to reduce the inter-region look-ups. The availability of intra-region file chunks allows fast (high bandwidth) access to a file as the WiFi or Bluetooth can be used in short ranges. 
So, the ability to get files without Internet access would be considered as a user satisfaction parameter. 
Though accessing a file from a local neighborhood is cheaper, it takes some time to establish a connection and send file chunks over the Bluetooth link. So, it may be captured by two conflicting satisfaction values (one low and one high).  Similarly, a user may define satisfaction parameters and give MOS feedback values according to the performance of an application as mentioned in Table~\ref{parametersTable} with preference from 1 (top) to 9 (low). 
The value of each parameter is provided through a predefined policy.  
In the simulation, following $US_i$ values are assigned:
\begin{gather*}
US_i=\left\{\begin{array}{lll}
                2 & \textnormal{if 80\% $ < US_i \leq $100\%}  \\
                1 & \textnormal{if 60\% $ < US_i \leq $80\%}
                \\
                0 & \textnormal{if 40\% $ < US_i \leq $60\%}
                \\
                -1 & \textnormal{if 20\% $ < US_i \leq $40\%}
                \\
                -2 & \textnormal{if $US_i < $20\%}
            \end{array}\right.
\end{gather*}          

\subsection{Result Analysis}
The simulation is initialized with 1K shared files among users and 500 new files are added in each iteration. 
DHT-D2D overlay including meta-data implementation, such as V-Chord, keeps information of shared files at pilots and eNodeBs.  The meta-data at a pilot includes all destination MS IDs those have been downloaded (or stored) of a file within its vicinity. An eNodeB includes all responsible overlay pilot IDs those have been retrieved of a file within its region. 
The cache entries can be updated on insertion or retrieval of a new shared file. A similar update of cache is performed for old files or whenever an MS leaves the overlay, which makes an efficient cache size management.

Figure~\ref{fig:UserSatisfection} illustrates that the user satisfaction of randomly positioned D2D users under a single cell in the LTE-A simulator. It shows that each user adopts the top three parameters in their preference set. In Figure~\ref{fig:userSati-a}, each file search is independent of the previous
search. 
The result shows that the D2D users with DHT overlay achieve three times better QoE than that without using DHT. 
Figure~\ref{fig:userSati-b} shows that user satisfaction reaches the near-maximum value with DHT-D2D in which the meta-data of the previously searched file is available at the pilot node. %
   \begin{figure}[!ht]
   \centering
     \subfloat[All search files are new. \label{fig:userSati-a}]{%
       \includegraphics[width=0.4\textwidth]{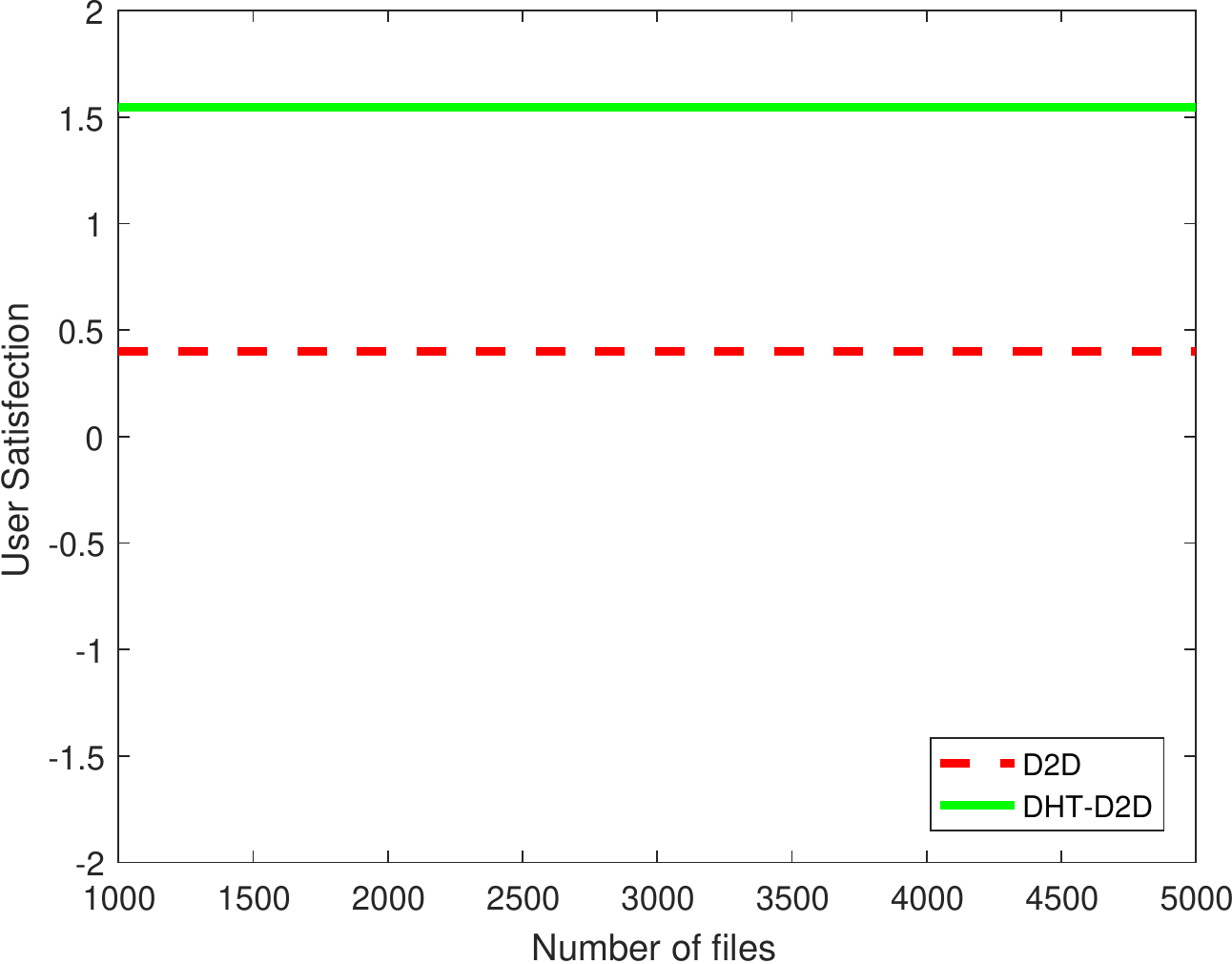}
     }
     \subfloat[Search files are available in the vicinity. \label{fig:userSati-b}]{%
       \includegraphics[width=0.4\textwidth]{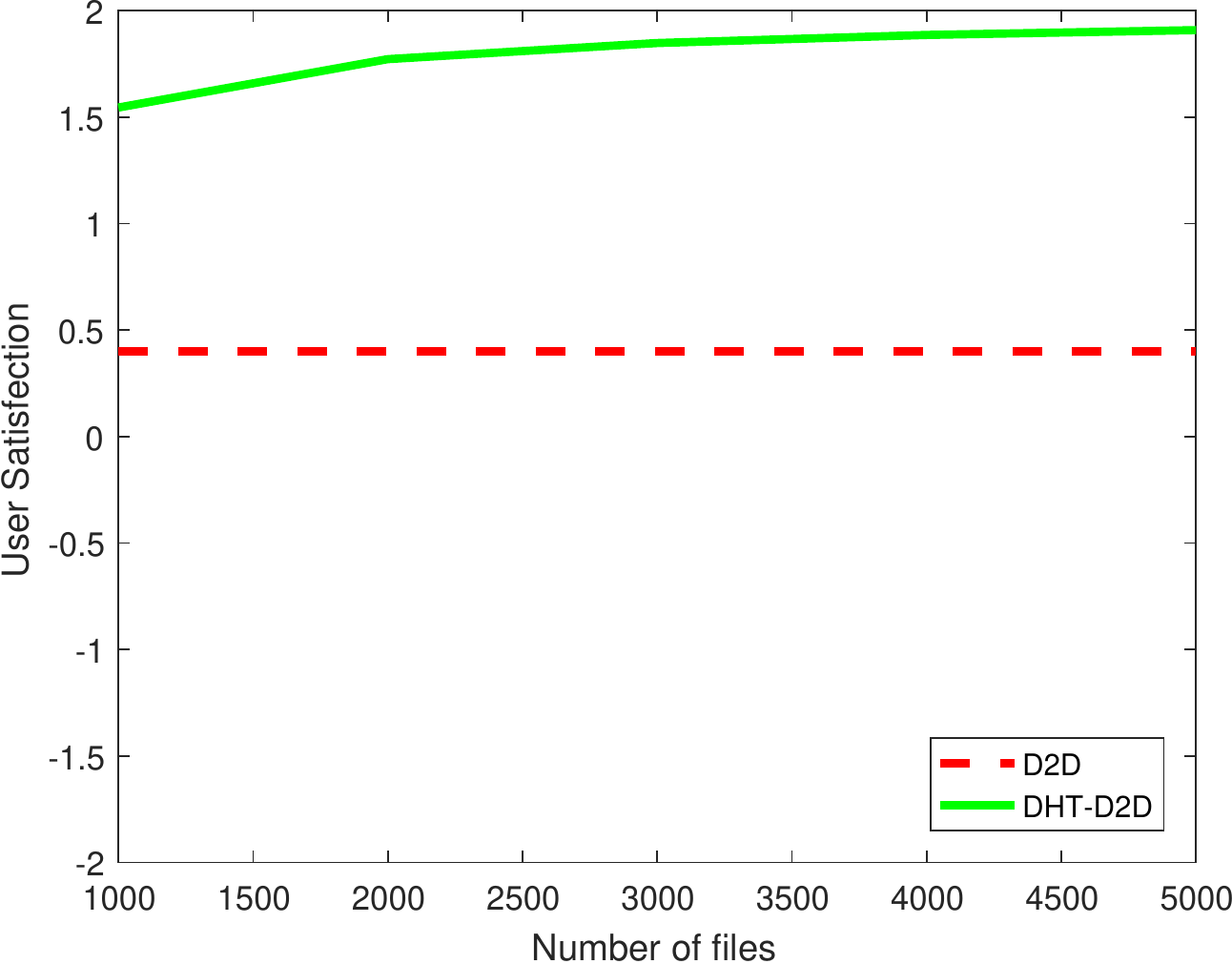}
     }
     \caption{User satisfaction among D2D and DHT-D2D users where 1K new files are added.}
     \label{fig:UserSatisfection}
   \end{figure}

 Figure~\ref{fig:UserParameters} shows the interchanging effect in preference of user satisfaction parameters, which would result in a different QoE.  Figure~\ref{fig:userPar-a} compares the effect in the presence of new files in each iteration where QoE is static. Figure~\ref{fig:userPar-b} shows the improvement of satisfaction due to the available meta-data for the already searched files. Quite clearly, the performance of DHT-D2D is better that employs the use of meta-data of the previous search files.

   \begin{figure}[!ht]
   \centering
     \subfloat[All search files are new. \label{fig:userPar-a}]{%
       \includegraphics[width=0.4\textwidth]{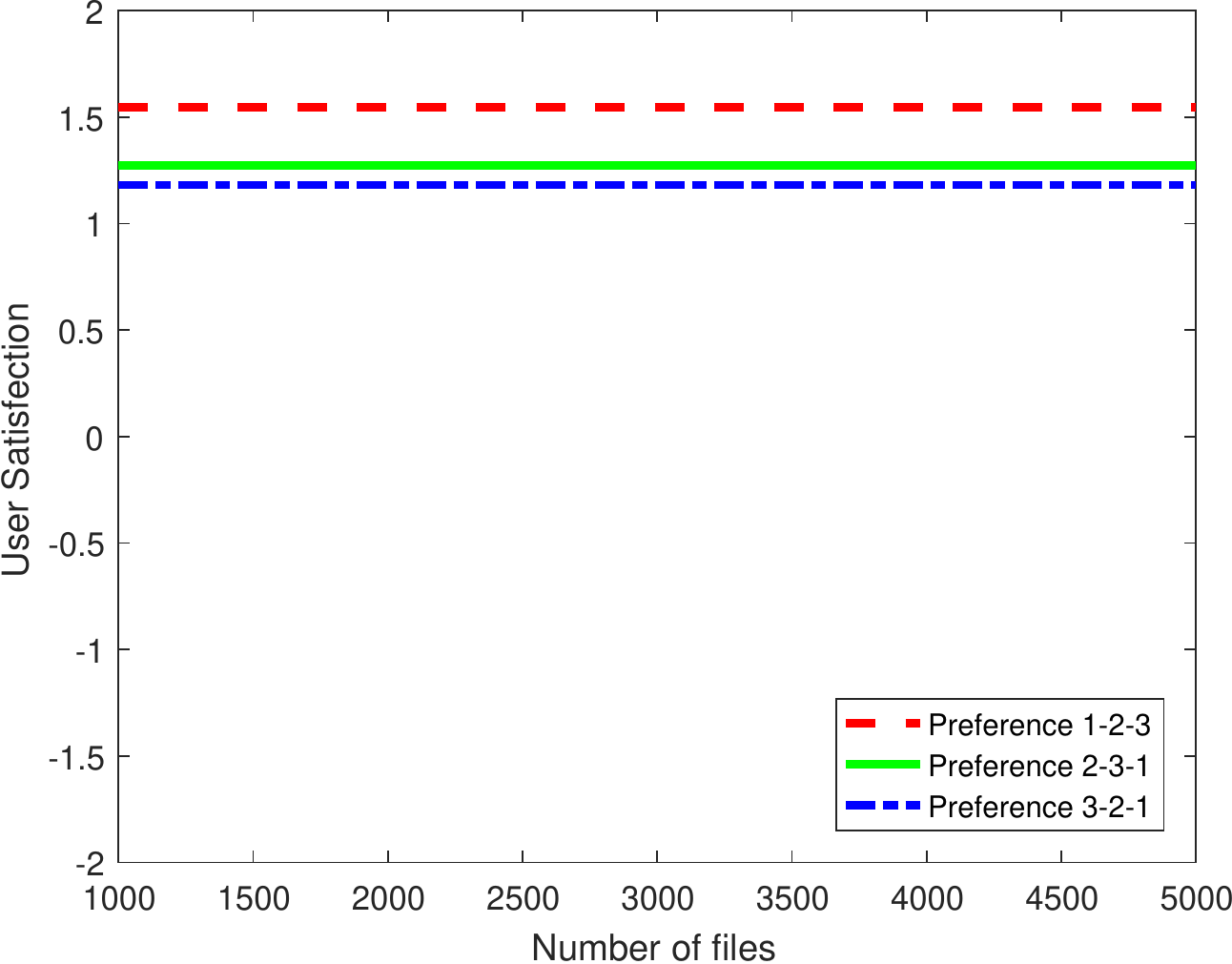}
     }
     \subfloat[Search files are available in the vicinity. \label{fig:userPar-b}]{%
       \includegraphics[width=0.4\textwidth]{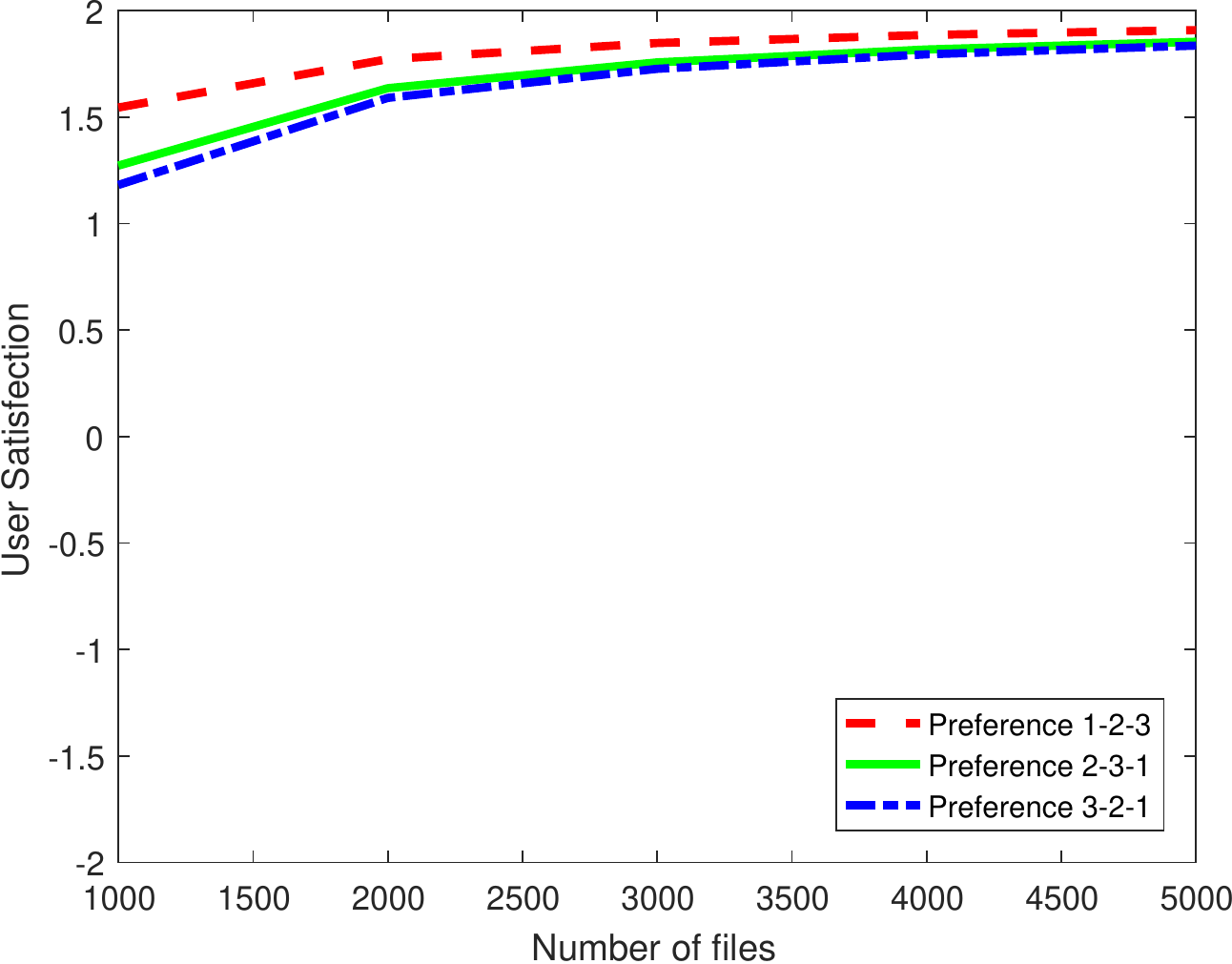}
     }
     \caption{Effect of the same parameters in different preference on user satisfaction.}
     \label{fig:UserParameters}
   \end{figure}
Figure~\ref{fig:diffPar} depicts the effects of different number of user satisfaction parameters (3, 6, 9). It can be observed from Figure~\ref{fig:diffPar} that  QoE degrades as the number of user satisfaction parameters increases. However, it can be improved using DHT-D2D based meta-data as shown in Figure~\ref{fig:compD2DwithDHT}. Figure~\ref{fig:compD2DwithDHT} illustrates  the  effects of presence of new file search (25\% and 50\%)  at each iteration. It shows that the applications having less QoE would perform considerably well under DHT-D2D after storing earlier searched files information (such as 25\% and 50\%), which distinguishes it from the other existing QoE enhancement mechanisms.   
\begin{figure}[!t]
    \centering
    \includegraphics[width=2.5in]{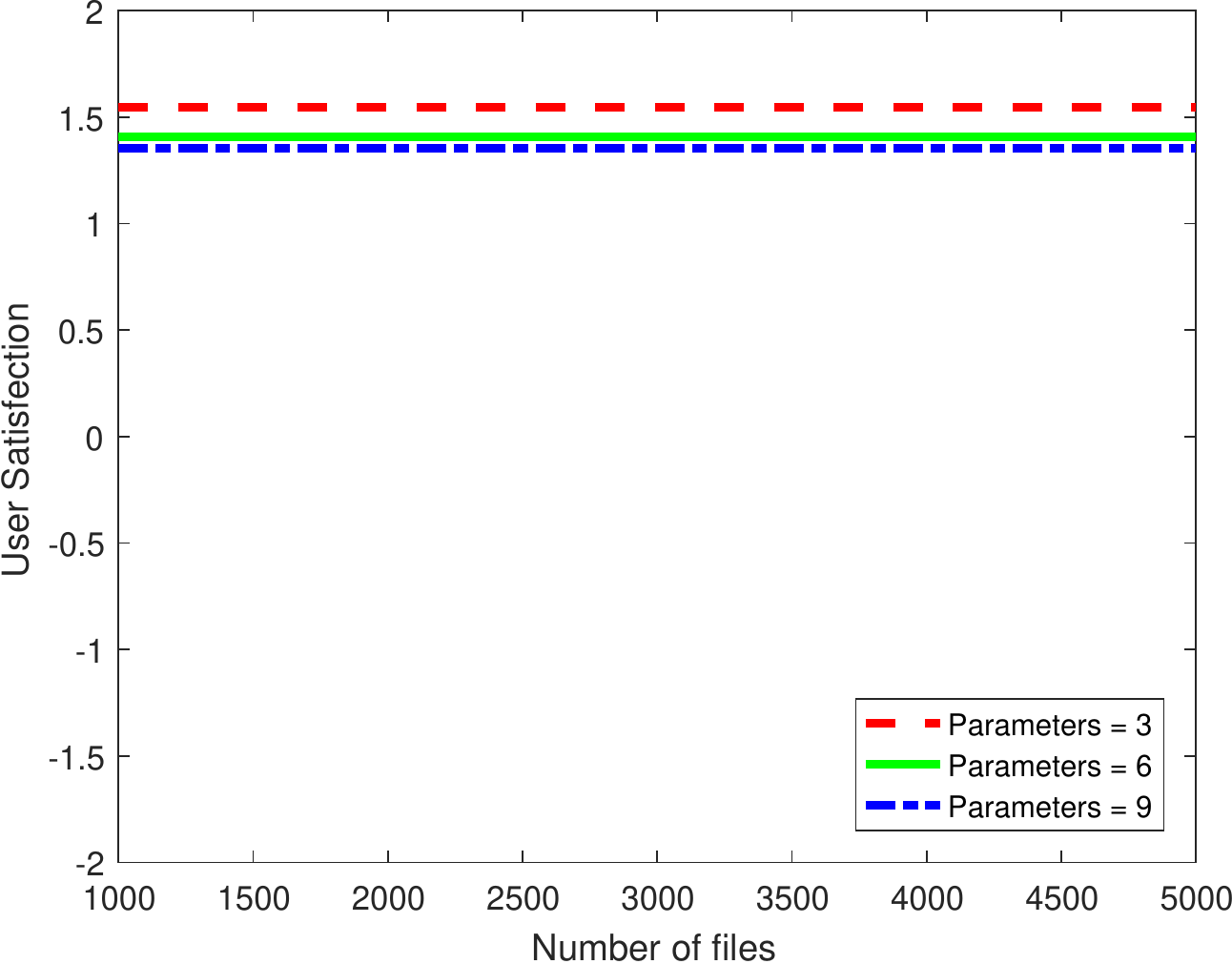}
    \caption{Effect of user satisfaction parameters (3, 6 and 9) on DHT-D2D in presence of all new files.}
    \label{fig:diffPar}
\end{figure}
\begin{figure}[!t]
    \centering
    \includegraphics[width=2.5in]{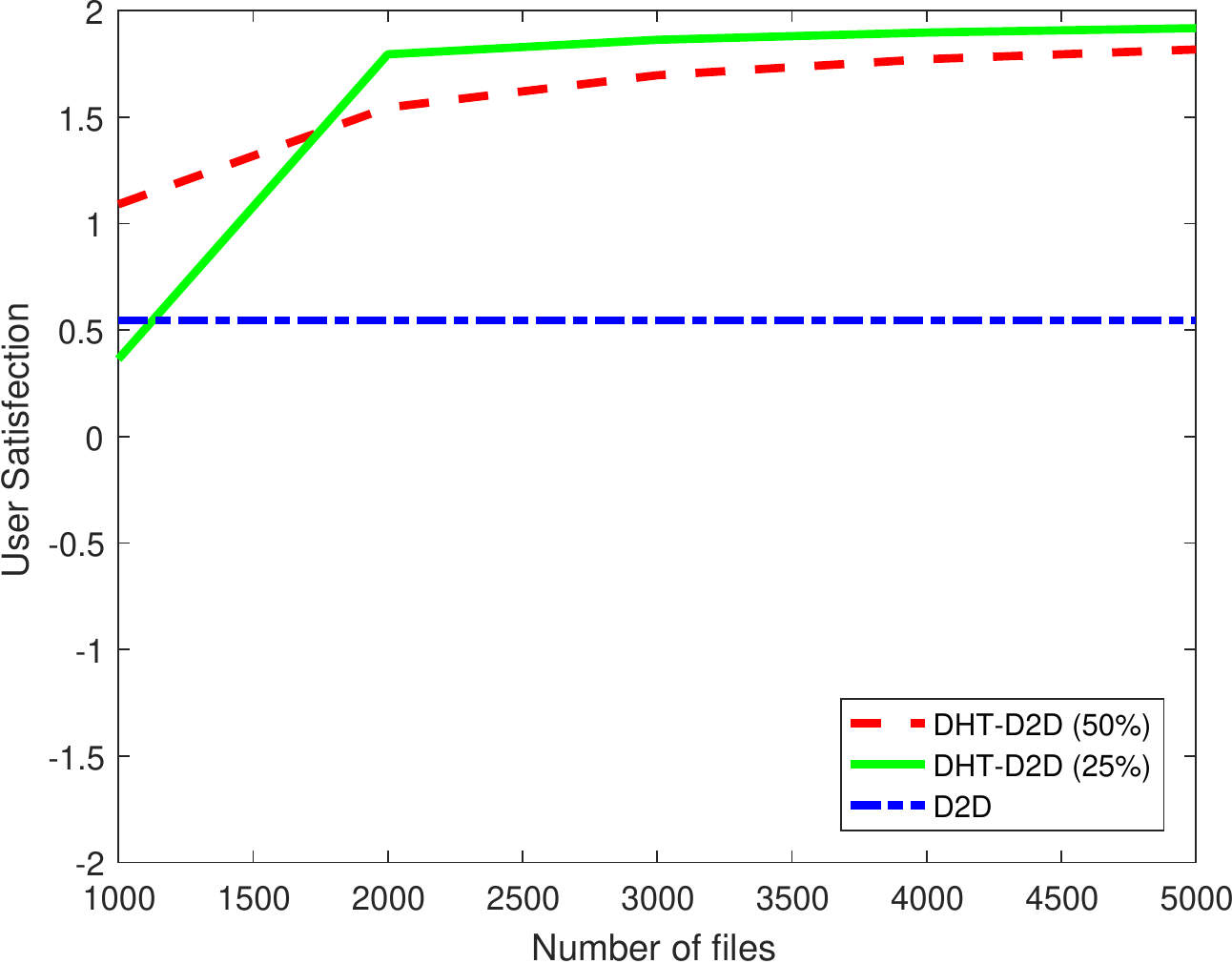}
    \caption{Effect of DHT-D2D mechanism over non-DHT D2D where old files are available for DHT within the vicinity and new files  (25\% and 50\%) are available at each iteration.}
    \label{fig:compD2DwithDHT}
\end{figure}
\section{Conclusion}
\label{con}
In this paper we evolve a mechanism to assess the quality of experience for applications which can take advantage of D2D communication by leveraging a structured overlay of user groups. First, we proposed a heuristic-based  (P-Median using Lagrangian relaxation) solutions for selection and organization of the user devices into an optimized DHT overlay. The solution has been analyzed experimentally and found to be suitable for smartphone applications. Further, we defined a model to assess, predict and reduces the QoE computation. It allows a smartphone user to define satisfaction parameters application specific and compute the QoE efficiently.   
The current work is not just limited to proof of concepts, but for creating a road-map for implementation of a whole technique for content sharing through close groups of smartphone user. This is a work in progress.

\bibliographystyle{unsrt}  
\bibliography{reference} 

\end{document}